\newtheorem{condition}{Condition}
\newtheorem{remark}{Remark}
\newtheorem{definition}{Definition}
\newtheorem{theorem}{Theorem}
\newtheorem{lemma}{Lemma}
\newtheorem{challenge}{Challenge}
\begin{document}
\title{\fontsize{22}{25}\selectfont PrivTuner with Homomorphic Encryption and LoRA: \\ A P3EFT Scheme for Privacy-Preserving
Parameter-Efficient Fine-Tuning of AI Foundation Models}

\author{Yang~Li, Wenhan~Yu, Jun~Zhao
\thanks{The authors are all with the College of Computing and Data Science, Nanyang Technological University (NTU), Singapore. Yang Li is also with ERI@N, Interdisciplinary Graduate Programme, NTU, Singapore. Emails: yang048@e.ntu.edu.sg, wenhan002@e.ntu.edu.sg, JunZhao@ntu.edu.sg.\\ Corresponding author: Jun Zhao
} 
}

\maketitle
\begin{abstract}
AI foundation models have recently demonstrated impressive capabilities across a wide range of tasks. Fine-tuning (FT) is a method of customizing a pre-trained AI foundation model by further training it on a smaller, targeted dataset. In this paper, we initiate the study of the Privacy-Preserving
Parameter-Efficient FT (P3EFT) framework, which can be viewed as the intersection of Parameter-Efficient FT (PEFT)
and Privacy-Preserving FT (PPFT). PEFT modifies only a small subset of the model's parameters to achieve FT (i.e., adapting a pre-trained model to a specific dataset), while PPFT uses privacy-preserving technologies to protect the confidentiality of the model during the FT process. There have been many studies on PEFT or PPFT, but very few on their fusion, which motivates our work on P3EFT to achieve both parameter efficiency and model privacy.    
To exemplify our P3EFT, we present the \textit{PrivTuner} scheme, which incorporates Fully Homomorphic Encryption (FHE) enabled privacy protection into LoRA (short for ``Low-Rank Adapter''), a popular PEFT solution published in ICLR 2021~\cite{hu2021lora}.
Intuitively speaking, PrivTuner allows the model owner and the external data owners to collaboratively implement PEFT with encrypted data. After describing PrivTuner in detail, we further investigate its energy consumption and privacy protection. Then, we consider a PrivTuner system over wireless communications and formulate a joint optimization problem to adaptively minimize energy while maximizing privacy protection, with the optimization variables including FDMA bandwidth allocation, wireless transmission power, computational resource allocation, and privacy protection.
A resource allocation algorithm is devised to solve the problem.
Experiments demonstrate that our algorithm can significantly reduce energy consumption while adapting to different privacy requirements.

\end{abstract}

\begin{IEEEkeywords}
Parameter efficient fine-tuning, AI foundation models, wireless communications, privacy computing, resource allocation, FDMA.
\end{IEEEkeywords}

\IEEEpeerreviewmaketitle
\section{Introduction} \label{secIntroduction}

\textbf{AI Foundation Models.} In the current vibrant era of Artificial Intelligence (AI), large foundation models like BERT~\cite{devlin2018bert}, CLIP~\cite{radford2019language}, and \mbox{GPT-3}~\cite{brown2020language} 
have ushered in a significant revolution, moving beyond conventional machine-learning techniques.
These models often have a large number of parameters and sophisticated structures, which can discern nuanced context features, even rivalling human proficiency.
Through pre-training on massive datasets, foundation models learn a wide range of general features and patterns, enabling their effective adaptation to different downstream tasks via a transfer learning method known as fine-tuning~\cite{ouyang2022training}. For instance, vanilla BERT can be easily fine-tuned to tasks including machine translation~\cite{dai2022bertology}, sentiment analysis~\cite{hoang2019aspect}, abstract summary~\cite{wang2019text} and question answering~\cite{wang2019multi}, achieving satisfactory performance.
This fine-tuning process is pivotal, as it customizes those foundation models for the specific needs of applications, providing better personalized services to users.

\textbf{Fine Tuning (FT).} Tuning foundation models is challenging, as discussed below. First, with the rise in concern about data privacy, entities are increasingly reluctant to let their data leave their local devices, which makes centralized training difficult to implement~\cite{zhou2023resource}. 
Additionally, tuning a foundation model with a large number of parameters is usually computationally intensive, thus causing high training consumption. If the entity desires local fine-tuning, this can pose a large burden on it, especially in computational resource-limited scenarios.

\textbf{Parameter-Efficient FT: PEFT.} 
To address the above challenges, various PEFT methods have been proposed, including Low-Rank Adapter (LoRA)~\cite{hu2021lora}, prefix-tuning~\cite{li2021prefix}, prompt tuning~\cite{qin2021learning}, BitFit~\cite{zaken2021bitfit}, and more. PEFT allows entities to update only a small portion of the entire model's parameters, thus reducing the computational burden in fine-tuning progress. However, most existing PEFT methods permit external entities access to the full foundation model, neglecting concerns regarding model privacy. Institutes or companies often invest huge amounts of financial support in training and maintaining a well-performing foundation model (e.g., OpenAI spends approximately \$700,000 daily to operate ChatGPT~\cite{srivastawa2023exploring}). As a result, they are often unwilling to share the model with external entities for various reasons, such as intellectual property, profitability, concerns about abuse, etc. 
Consequently, it becomes imperative to explore the development of a {model- and data-privacy-friendly} fine-tuning framework.

\textbf{Privacy-Preserving FT: PPFT.} 
PPFT is a subset of Privacy-Preserving Machine Learning (PPML) and proposed in response to privacy concerns in fine-tuning, using various Privacy-Preserving Technologies (PPTech)\footnote{Terms related to PPTech include Privacy-Enhancing Technologies (PET), which has also been widely used. We adopt the phrase ``privacy-preserving'' instead of ``privacy-enhancing'' to avoid the abbreviation of ``Privacy-Enhancing FT'' being PEFT, since PEFT is reserved for ``Parameter-Efficient FT''.}, e.g., Federated Learning (FL)~\cite{chen2023federated,yu2023federated}, Differential Privacy (DP)~\cite{yu2021differentially}, secure Multi-Party Computing (MPC)~\cite{ramachandran2021s++} and Homomorphic Encryption (HE)~\cite{al2020privft,lee2022privacy}.
However, most existing PPFT methods achieve privacy at the expense of communication or computing resources. For instance, FL allows each entity to conduct local fine-tuning thus ensuring data privacy, but requires additional communication overhead. Therefore, it is also worth investigating how to {mitigate the consumption} of existing PPFT methods, e.g., latency and energy.


Based on the above, we compare the characteristics of the above two approaches:
\begin{itemize}
    \item Parameter-Efficient FT (PEFT) effectively reduces the computation burden, but existing PEFT methods often overlook the privacy issue.
    \item Privacy-Preserving FT (PPFT) could protect data and model privacy, but it often introduces additional communication and computation overheads.
\end{itemize}

We recognize that PEFT and PPFT do not necessarily have to function as exclusive mechanisms. Instead, their integration can provide mutual support, enhancing privacy under limited computational budgets. This motivates our research explained below.
     



\textbf{Privacy-Preserving Parameter-Efficient FT: P3EFT studied in this paper.}  We initiate the study of  \textit{Privacy-Preserving Parameter-Efficient Fine Tuning (P3EFT)}. P3EFT can be understood as the intersection of PEFT and PPFT. 
To demonstrate the feasibility of P3EFT, we propose a specific instantiation, \textit{PrivTuner}, which leverages Fully Homomorphic Encryption (FHE) and LoRA for privacy-preserving and computation-efficient fine-tuning. By integrating FHE, PrivTuner enables computations to be performed directly on encrypted data, preventing data exposure while maintaining the efficiency of LoRA-based adaptation. This design allows external devices to fine-tune models collaboratively with minimal trust assumptions, making it suitable for scenarios such as federated learning, cloud-based AI services, and edge intelligence. We will detail the system model of PrivTuner in Section~\ref{sec:system}.


\textbf{Resource Allocation in PrivTuner.} Resource Allocation (RA) plays a pivotal role in the wireless communication network where resources (e.g., transmission power and bandwidth) are often limited. In our paper, we consider PrivTuner to operate in a wireless communication network comprising one server and multiple clients, and RA could effectively manage different resources to improve the overall network performance. 
By incorporating RA, we can formulate a customized optimization problem for the PrivTuner network that aligns with specific resource budgets and performance demands (e.g., energy consumption, latency and privacy) and then design an efficient algorithm to solve it. The problem formulation and RA algorithm design are discussed in Sections~\ref{sec:problem} and \ref{sec:algorithm}.

\textbf{Contributions.} Our main contributions are as follows:
\begin{itemize}
    \item 
    We introduce the \textit{P3EFT} framework, which bridges the gap between PEFT and PPFT by integrating their advantages to enable both efficient and privacy-preserving fine-tuning. To validate this framework, we propose \textit{PrivTuner} as a concrete instance of P3EFT. PrivTuner combines LoRA with FHE to provide secure and efficient collaborative fine-tuning between the model owner and external devices, reducing the computational burden while ensuring data and model privacy.
    
    \item We investigate the time and energy consumption models within PrivTuner from both the server and device perspectives. We also provide a metric to evaluate the overall privacy protection level of PrivTuner.
    A joint problem is formulated by optimizing computation capacities, wireless communication resources and FHE settings to minimize energy consumption while maximizing privacy protection.
    \item  A resource allocation algorithm employing both branch and bound (B\&B) and fractional programming techniques is devised to solve the formulated problem effectively. Time complexity, solution quality and convergence analysis are also provided. Experimental results validate the effectiveness and superiority of our algorithm.
    
\end{itemize}

The rest of the paper is organized as follows. Section \ref{sec:review} presents preliminaries and Section \ref{secRelated} reviews related work. 
The PrivTuner scheme and its consumption models are introduced in Section \ref{sec:system}. Section~\ref{secConsumption} models the consumption and privacy protection within PrivTuner. Section \ref{sec:problem} presents the problem formulation in detail. Section~\ref{sec:algorithm} introduces our RA algorithm with time, solution and convergence analysis. Section \ref{sec:experimental results} gives the fitting performance, experimental setting and results. Section \ref{sec:7} concludes the paper.

\section{Preliminaries}\label{sec:review}

This section presents preliminaries related to our research.  

\subsection{Low-Rank Adapters (LoRA) for AI Foundation Models}\label{subsec:lora}
The motivation of LoRA~\cite{hu2021lora} is that the updates to pre-trained model weights are likely to have a low intrinsic rank during adaption. Hence, for a pre-trained weight matrix $W_0 \in \mathbb{R}^{d\times k}$, the update is constrained by a low-rank decomposition: $W_0+\Delta W=W_0+A^1A^2$, where $A^1\in\mathbb{R}^{d\times r},A^2 \in \mathbb{R}^{r \times k}$, and the rank $r \ll \min\{d,k\}$. During training, $W_0$ is frozen and not updated, while $A^1$ and $A^2$ contain trainable parameters. When given an input $x$ to the model, the forward pass yields:
\begin{align}
    y = W_0x+ \Delta W x = W_0x+A^1A^2x.
\end{align}
LoRA initializes $A^2$ as a random Gaussian matrix and $A^1$ as a zero matrix. 

\subsection{Fully Homomorphic Encryption (FHE)} \label{secfhe}
Fully Homomorphic Encryption stands as an advanced cryptographic technique, pioneered by~\cite{gentry2009fully}, 
supporting computations to be executed on encrypted data directly. The implications of FHE are profound for security across diverse applications, notably in privacy-sensitive scenarios.
FHE has evolved with several typical variations, including Brakerski/Fan--Vercauteren
(BFV)~\cite{brakerski2012fully} scheme, Brakerski--Gentry--Vaikuntanathan (BGV)~\cite{brakerski2014leveled}, and Cheon--Kim--Kim--Song (CKKS)~\cite{cheon2017homomorphic}. CKKS has gained popularity due to its extraordinary performance in handling real numbers, and we select it as the FHE scheme in this paper. 
An overview of the algorithms in CKKS is as follows:
\begin{itemize}
    \item $\mathrm{KeyGen}(\lambda,q)\rightarrow(\mathrm{pk},\mathrm{sk})$: On input a polynomial degree $\lambda$ and a coefficient modulus $q$, randomly outputs a public key $(\mathrm{pk})$ and a secret key $(\mathrm{sk})$. 
    \item $\mathrm{Enc}(\mathrm{pk},m)\rightarrow\widetilde{c}$: On public key $\mathrm{pk}$ and a plaintext $m$, perform encryption to obtain a ciphertext $\widetilde{c}$.
    \item $\mathrm{Dec}(\mathrm{sk},\widetilde{c})\rightarrow{m}$: On secret key $\mathrm{sk}$ and a ciphertext $\widetilde{c}$, perform decryption to obtain the plaintext message $m$.
    \item $\mathrm{Eval}(\mathrm{pk},\widetilde{c}_1,\widetilde{c}_2,\mathrm{func})\rightarrow\widetilde{c}$: On public key $\mathrm{pk}$, two ciphertexts $\widetilde{c}_1,\widetilde{c}_2$ encrypted from $m_1,m_2$, and a linear function $\mathrm{func}$, outputs a new ciphertext $\widetilde{c}$ such that it is the same as $\mathrm{Enc}(\mathrm{pk},\mathrm{func}(m_1,m_2))$.
\end{itemize}
A more detailed discussion of CKKS can be found in~\cite{cheon2017homomorphic}.
Specifically, CKKS is designed for approximate arithmetic on real and complex numbers, distinguishing it from integer-based schemes (e.g., BFV, BGV). In addition, CKKS effectively manages the noise growth through customized scaling and rescaling techniques, which could prevent the overflow issue. The above characteristics make CKKS well-suited for privacy-preserving deep learning tasks. Deep learning models often operate on real-valued numbers (e.g., floating-point numbers) rather than exact integers. These models are also robust to small numerical errors or approximations due to the reliance on optimization techniques and probabilistic learning. Consequently, CKKS is widely used in privacy-preserving applications across various AI models, including CNNs~\cite{mihara2020neural,al2020privft,lee2022privacy}, RNNs~\cite{bakshi2020cryptornn,podschwadt2021non}, and Transformers~\cite{zhang2024secure,rovida2024transformer}.

\begin{figure}[t] 
\centering
\includegraphics[width=1\linewidth]{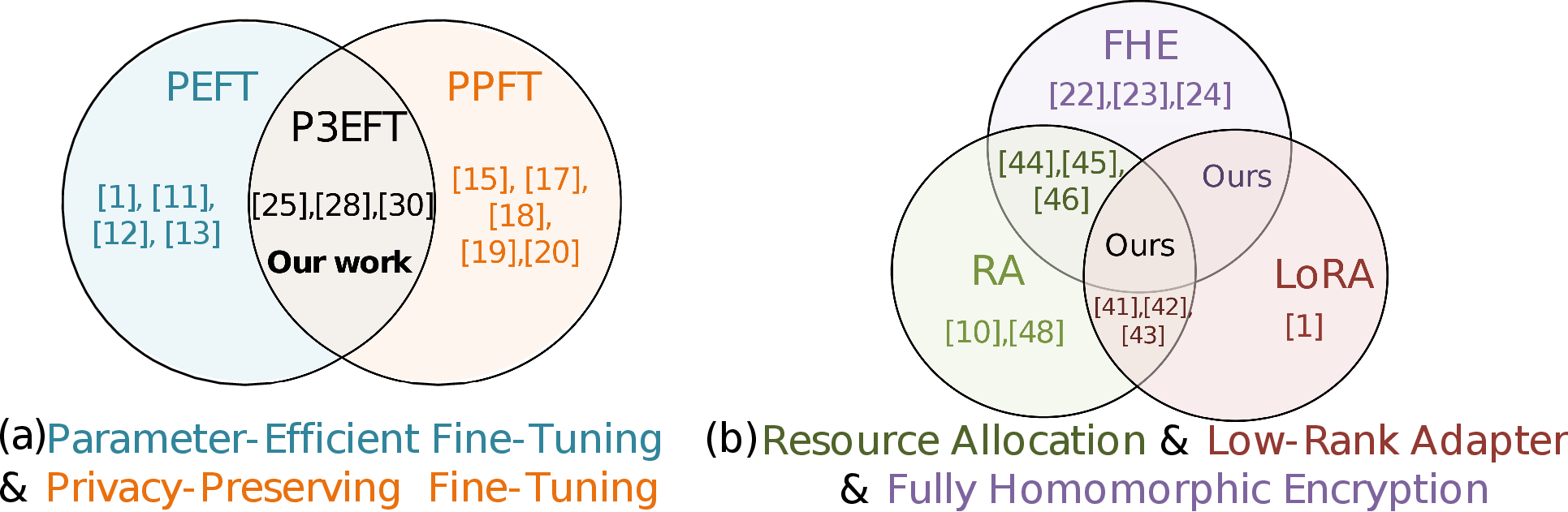}
\caption{The relationships among the concepts.}
\vspace{-0.05cm}
\label{fig:relationships}
\end{figure}

\section{Related Work} \label{secRelated}

We discuss additional work related to our research. Some related studies have already been mentioned in Section~\ref{secIntroduction}.

To guide the discussion of related work, we plot Figures \ref{fig:relationships}(a) and (b) at different granularities.
Figure \ref{fig:relationships}(a) presents a Venn diagram of general technologies, where we examine PEFT, PPFT, and their intersection: P3EFT. Note that PEFT, PPFT, and P3EFT are all high-level notions of the technologies, and do not consider the specific implementations.
Figure \ref{fig:relationships}(b) is about specific implementations of the technologies. In particular, we instance PEFT as LoRA, and PPFT as being enabled by FHE. In addition, we also plot ``Resource Allocation (RA)'' as a circle in the Venn diagram of Figure \ref{fig:relationships}(b), since the technical analysis of our paper is about RA.

\subsection{Discussing Related Studies based on Figure \ref{fig:relationships}(a)} 

The related studies in Fig.~\ref{fig:relationships}(a) include those related to 1) P3EFT, 2) PEFT and 3) PPFT.

After proposing the term P3EFT, we found a recent unpublished work~\cite{zmushko2023privacy} which also coins P3EFT independently. The study~\cite{zmushko2023privacy} incorporates \textit{split learning}~\cite{abuadbba2020can} into PEFT to achieve only heuristic privacy without even defining what protection is being offered (in fact, \textit{split learning} is shown to be inherently
insecure~\cite{pasquini2021unleashing}). In addition, Bu~\emph{et~al.}~\cite{bu2022differentially} incorporate the formal notion of differential privacy (DP) into PEFT to achieve P3EFT, but the name P3EFT is not mentioned and DP is orthogonal to the privacy protection mechanism of HE used in our paper. Also, DP inherently introduces noise or randomness to the model and may degrade the model accuracy, while Homomorphic Encryption (HE) generally does not hurt accuracy~\cite{boulemtafes2020review}. Finally, the idea of P3EFT (but not the name) is also briefly discussed in~\cite{sai2024parameter} as a potential research direction. 

The related studies of PEFT and PPFT have been presented in Section~\ref{secIntroduction}, and are not repeated here.

 
\subsection{Discussing Related Studies based on Figure \ref{fig:relationships}(b)} 

To present related research according to Fig.~\ref{fig:relationships}(b), where RA is short for ``Resource Allocation'', we will consider the following combinations: 1) RA + LoRA + FHE (i.e., RA of PrivTuner), 2) LoRA + FHE (i.e., PrivTuner), 3) RA + LoRA, and 4) RA + FHE.

As noted in the subsection above, only a few studies~\cite{zmushko2023privacy,bu2022differentially,sai2024parameter} touch upon P3EFT, and none of them is about RA, so there is no very 
related work on RA of PrivTuner or other P3EFT instances. 
Hence, except~\cite{zmushko2023privacy,bu2022differentially,sai2024parameter}, there is no other work closely related to PrivTuner. Yet, since PrivTuner is an important contribution of our paper, we present some work on ``FHE + Deep Learning (DL)'' from the perspective of different kinds of DL models. 
The DL models to be discussed include CNNs, RNNs,
and transformers.
\begin{itemize}[leftmargin=5pt] 
\item  
\textbf{Using FHE in CNNs.} CNNs are crypto-friendly due to linear structures and easily linearized activation layers. 
CryptoNets~\cite{gilad2016cryptonets} utilizes YASHE' for CNNs and achieves high accuracy on MNIST. CareNets~\cite{jin2019carenets} implements a compact CNN for inference on encrypted high-resolution images. 
SHE~\cite{lou2019she} uses TFHE to achieve a Shift-accumulation-based model, adapting ResNet-18 for the ImageNet dataset.
Lee~\textit{et~al}.~\cite{lee2022privacy} employ the RNS-CKKS~\cite{cheon2019full} to adapt ResNet-20 on the CIFAR-10 dataset.
\item  
\textbf{Using FHE in RNNs.}
RNNs often have higher multiplicative depth than CNNs, requiring larger crypto parameters~\cite{podschwadt2022survey}. Podschwadt~\textit{et~al}.~\cite{podschwadt2021non} propose parallel RNN blocks to reduce the multiplicative depth, which can be supported by CKKS.
Jang~\textit{et~al}.~\cite{jang2022privacy} implement encryption on GRUs using proposed encryption scheme MatHEAAN extended from CKKS.
\item 
\textbf{Using FHE in Transformers.} Transformers involve complex and nonlinear operations. Iron~\cite{hao2022iron} and BOLT~\cite{pang2023bolt} present privacy-preserving
inference solutions for transformers that support matrix multiplications and nonlinear computations. \mbox{BlindTune}~\cite{panzade2024can} further proposes to secure tuning transformers using CKKS and is tailored for image classification tasks. 
\end{itemize}



Resource allocation has been widely studied in the context of wireless communications, where it is often employed to optimize the use of network resources under different constraints. 
Previous works have applied RA to federated learning~\cite{zhou2022joint,zafar2024federated}, semantic communication~\cite{wang2024adaptive,li2023resource} and unmanned aerial vehicle (UAV) frameworks~\cite{yan2023joint,mao2023joint}, showing its effectiveness in improving overall network performance under different application scenarios.

Here we discuss when RA meets LoRA or PEFT in general. 
FedPipe~\cite{fang2024automated} proposes an automated federated pipeline that fine-tunes large language models using a novel MILP optimization approach, incorporating efficient local training and partial weight aggregation to enhance performance without increasing communication overhead.
Similarly, FedPEAT \cite{chua2023fedpeat} proposes a federated PEFT method while incorporating an adaptive control mechanism to optimize the latency, storage and performance. Yu~\textit{et~al}.\cite{yu2023orchestration}
also integrate the Emulator-Adapter architecture into Mobile Edge Computing (MEC) and employ a hybrid multi-agent Deep Reinforcement Learning (DRL) to ensure adaptability and fine-tuning efficiency.

Here we examine when RA meets FHE or PPTech in general. 
PFMLP \cite{fang2021privacy} proposes a multi-party privacy-preserving machine learning solution supported by the Paillier algorithm \cite{paillier1999public} and discusses the influence of encryption
key length, network structure and number of clients to optimize the computational overhead and training. DAMCREM \cite{suzuki2020damcrem} divides a task into several macro-tasks supported by FHE and proposes a dynamic allocation method of computation resources to macro-tasks to shorten the FHE execution time.






\begin{figure*}[t] 
\centering
\includegraphics[width=.85\linewidth]{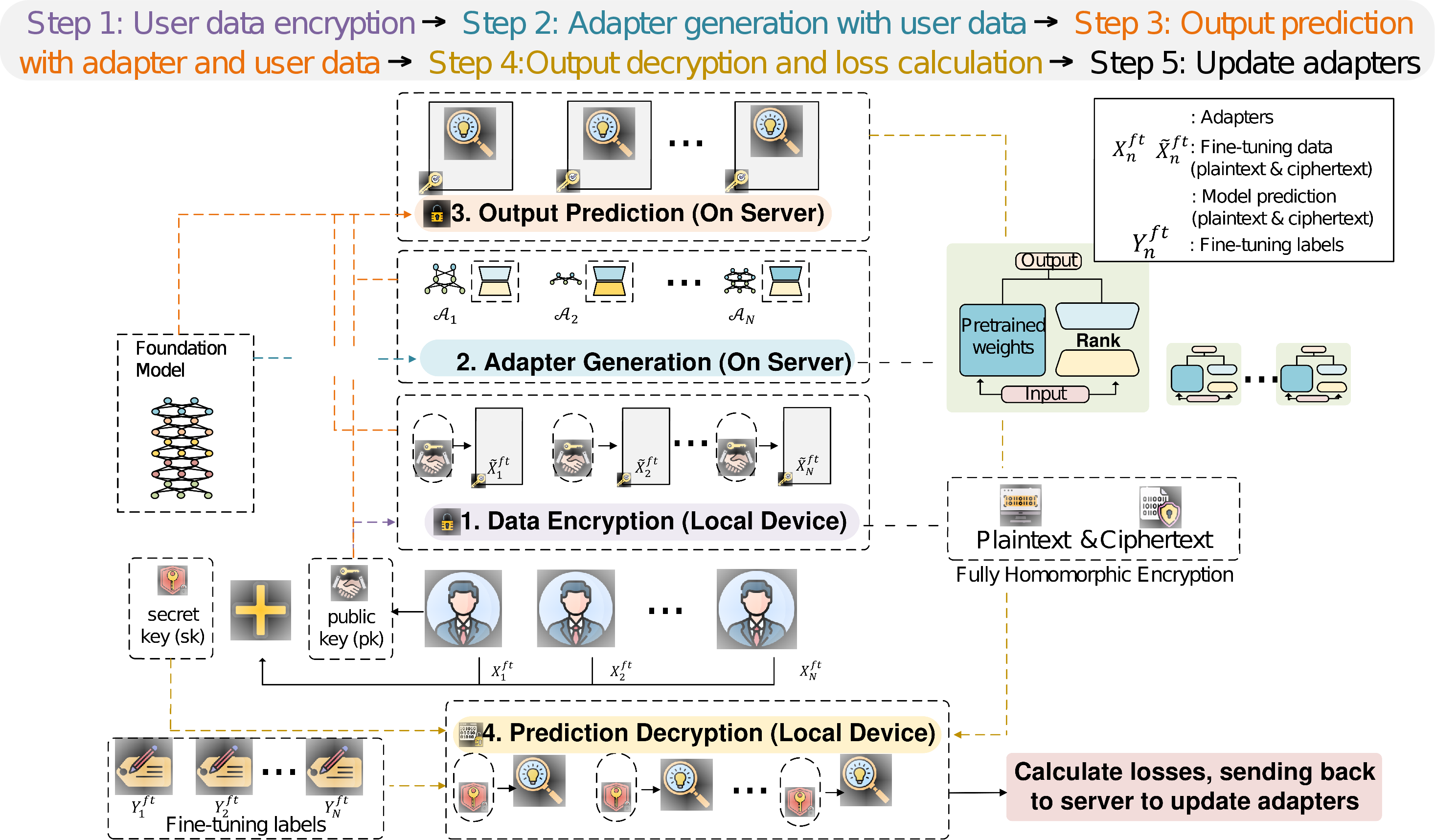}
\caption{An overview of the system and flows of our proposed PrivTuner.}
\vspace{-0.3cm}
\label{fig:system}
\end{figure*}

\section{System Model}\label{sec:system}
This section presents the system model, beginning with the threat model and an overview in Section~\ref{subsec:threat_model_overview}. Then, we elaborate on the proposed scheme of PrivTuner in Section~\ref{subsec:privtuner}. Fig. \ref{fig:system} presents the overall flows of PrivTuner.

\subsection{Threat Model and Overview}\label{subsec:threat_model_overview}

Consider a system comprising a model owner's server and a set of external mobile devices, denoted as $\mathcal{N}:=\{1,\ldots,N\}$.
The server hosts a pre-trained foundation model serving as a \textit{Machine Learning as a Service} (MLaaS) platform (e.g., ChatGPT operated by OpenAI).
Each mobile device $n\in \mathcal{N}$ aims to fine-tune the foundation model using its own private data to enhance personalized performance.
Operating under an honest but curious security model~\cite{goldreich2004foundations}, our PrivTuner allows devices to transmit FHE-encrypted data instead of the raw one to the server. 
Then, the server creates an adapter for each device via LoRA, and the inference results (also encrypted) are returned to the devices. 
Upon reception, devices decrypt the inference results and calculate the loss with the corresponding labels. The loss information is then sent back to the server for gradient calculation and adapter updating.



\subsection{HE-friednly BERT-Tiny Model}\label{subsec:FHEBERT}

In our work, we adopt the BERT-Tiny model, a small and efficient variant of the original BERT~\cite{turc2019well}. Following the approach proposed by Rovida \textit{et al.}~\cite{rovida2024transformer}, we use the CKKS scheme to perform encrypted computations while maintaining an acceptable trade-off between computational overhead and model accuracy.

The employed BERT-Tiny model involves large matrix multiplications and several non-linear layers (e.g., Softmax, GeLU, and LayerNorm). In particular, those non-linear functions can not be implemented directly, as the current CKKS only support addition and multiplication operations. 
To efficiently handle matrix multiplications in ciphertext, the SIMD technique supported by CKKS could be leveraged, which allows parallel computations on encrypted vectors. For different non-linear functions involved, we follow Rovida \textit{et al.}~\cite{rovida2024transformer} and employ different methods (e.g., Maclaurin series and Chebyshev polynomial) to approximate them efficiently. To save space, more detailed information and the approximation performance for non-linear layers are reported in Appendix.~\ref{sec:Approximations}. 

\subsection{PrivTuner} \label{subsec:privtuner}
Algorithm~\ref{Algorithm:PrivTuner} presents the sequence of steps executed by both mobile devices and the server within the PrivTuner scheme. We will detail each step in the following paragraphs.

\begin{algorithm}
\footnotesize
\caption{PrivTuner.} 
\label{Algorithm:PrivTuner}
\KwIn{
Pre-trained model parameterized by ${W}_0$; 
\newline
Sets of FHE parameters $(\bm{\lambda},\bm{q})$;
\newline
Sets of fine-tuning data and labels $(\bm{X}^{\textnormal{ft}},\bm{Y}^{\textnormal{ft}})$.
}
\KwOut{Set of fine-tuned adapter $\bm{A}^{\textnormal{ft}}$.}

\textbf{Step 1: Data Encryption.}\\
\For{\underline{each mobile device $n \in \mathcal{N}$ in parallel}}{

$\mathrm{KeyGen}(\lambda_n,q_n)$ 
$\rightarrow$  $\mathrm{pk}_n,\mathrm{sk}_n$; 

$\mathrm{Enc}(\mathrm{pk}_n,X_n^{\textnormal{ft}})$ $\rightarrow$ 
$\widetilde{X}_n^{\textnormal{ft}}$;

Transmit $\widetilde{X}_n^{\textnormal{ft}}$ and $\mathrm{pk}_n$ to the server;
}

\textbf{Step 2: Adapter Generation.}\\


Generate adapters: $\mathrm{LoRA}_n({W}_0)\rightarrow \mathcal{A}_n:=\{A_n^1,A_n^2\},~\forall n \in \mathcal{N}$; 







\textbf{Step 3: Prediction.}



Output predictions: 
$\widetilde{Y}_n^{\textnormal{p}}=\mathrm{Eval}\big({\mathrm{pk}}_n,({W}_0,\mathcal{A}_n),\widetilde{X}_n^{\textnormal{ft}},f^{p}\big),~\forall n \in \mathcal{N}$, where $f^{p}$ is the prediction function defined in (\ref{equa:predition});

Send result $\widetilde{Y}_n^{\textnormal{p}}$ back to each device $n$;

\textbf{Step 4: Decryption and Loss Computation.}

\For{\underline{each mobile device $n \in \mathcal{N}$ in parallel}}{

$\mathrm{Dec}(\mathrm{sk}_n,\widetilde{Y}_n^{\textnormal{p}})$ 
$\rightarrow$ $Y_n^{\textnormal{p}}$; 

Compute the loss: $\mathcal{L}_n(Y_n^{\textnormal{p}},Y_n^{\textnormal{ft}})$ $\rightarrow$ $L_n$;

Transmit $L_n$ to the server;
}

\textbf{Step 5: Update of Adapters.}

Update each adapter $\mathcal{A}_n$ with $L_n$ and obtain $\mathcal{A}^{\textnormal{ft}}_n:=\{A_n^{1,\textnormal{ft}},A_n^{2,\textnormal{ft}}\}$.
\end{algorithm}
\textbf{Data Encryption.} Each mobile device $n$ initially generates a unique set of keys of FHE for encryption:
\begin{align}
    {\mathrm{pk}}_n, {\mathrm{sk}}_n = \mathrm{KeyGen}(\lambda_n,q_n),~\forall n \in \mathcal{N},\label{equa:key_generation1}
\end{align}
where $\lambda_n$ is the polynomial degree and $q_n$ is the coefficient modulus. 
With the generated public key $\mathrm{pk}_n$ from (\ref{equa:key_generation1}), now device $n$ could encrypt its fine-tuning data $X_n^{\textnormal{ft}}$ as follows:
\begin{align}
    {\widetilde{X}_n^{\textnormal{ft}}}= \mathrm{Enc}({\mathrm{pk}}_n,X_n^{\textnormal{ft}}),~\forall n \in \mathcal{N},\label{equa:data}
\end{align}
where $\widetilde{X}_n^{\textnormal{ft}}$ is the encrypted data. Both $\widetilde{X}_n^{\textnormal{ft}}$ and $\mathrm{pk}_n$ are then transmitted to the server.

\textbf{Adapter Generation.}
Upon receiving $\widetilde{X}_n^{\textnormal{ft}}$ and $\mathrm{pk}_n$, the server could apply various PEFT methods to the pre-trained foundation model $W_0$. Here we employ LoRA~\cite{hu2021lora} (explained in Section~\ref{subsec:lora}) to generate an adapter $\mathcal{A}_n=\{A_n^1,A_n^2\}$ for each device $n$ as described below:
\begin{align}
   \mathcal{A}_n = \mathrm{LoRA}_n(W_0),~\forall n \in \mathcal{N},\label{equa:lora}
\end{align}
where 
$\mathrm{LoRA}_n(\cdot)$ denotes the low-rank adaption operation (we have the subscript ``$n$'' just for generality, and the experiments use the same adapter for all $n$).
Since the size of $\mathcal{A}_n$ is much smaller than that of $W_0$, 
this approach dramatically reduces the number of trainable parameters.

\textbf{Prediction.} Then, the server could perform predictions on the encrypted data $\widetilde{X}_n^{\textnormal{ft}}$ with $(W_0,\mathcal{A}_n)$ and $\mathrm{pk}_n$, generating the corresponding predictions $\widetilde{Y}_n^{\textnormal{p}}$ (also encrypted). As discussed in Section~\ref{subsec:FHEBERT}, we could follow Rovida \textit{et al.}~\cite{rovida2024transformer} to implement the prediction, and the process can be generalized as follows:
\begin{align}
\widetilde{Y}_n^{\textnormal{p}}=\mathrm{Eval}\big({\mathrm{pk}}_n,({W}_0,\mathcal{A}_n),\widetilde{X}_n^{\textnormal{ft}},f^{p}\big),~\forall n \in \mathcal{N},
\end{align}
where $\mathrm{Eval}(\cdots)$ has been discussed in Section~\ref{secfhe} and $f^{p}$ denotes the prediction function working as below:
\begin{align}
    f^{p}({W}_0,\mathcal{A}_n,\widetilde{X}_n^{\textnormal{ft}})=W_0\widetilde{X}_n^{\textnormal{ft}}+A_n^1A_n^2\widetilde{X}_n^{\textnormal{ft}},~\forall n \in \mathcal{N}.\label{equa:predition}
\end{align}
where $W_0$ is the foundation model, $\mathcal{A}_n$ is the generated adapter, and $\widetilde{X}_n^{\textnormal{ft}}$ is the encrypted dataset for fine-tuning.
Since the fine-tuning labels $\bm{Y}^{\textnormal{ft}}$  are kept by mobile devices, the server sends $\widetilde{Y}_n^{\textnormal{p}}$ back to each device $n$ for loss computation.
It is worth noting that using prediction $\widetilde{Y}_n^{\textnormal{p}}$ to reason about the parameter matrix $W_0$ is computationally infeasible, especially for devices with limited computing resources. Thus, this will not compromise security.

\textbf{Decryption and Loss Computation.} Upon receiving $\widetilde{Y}_n^{\textnormal{p}}$, each device $n$ first use the secret key $\mathrm{sk}_n$ to perform decryption and obtain the prediction $Y_n^{\textnormal{p}}$ in plaintext:
\begin{align}
    Y_n^{\textnormal{p}} = \mathrm{Dec}(\mathrm{sk}_n,\widetilde{Y}_n^{\textnormal{p}}),~\forall n \in \mathcal{N}.
\end{align}
Then, the corresponding loss $L_n$ is computed as follows:
\begin{align}
    L_n = \mathcal{L}_n(Y_n^{\textnormal{p}},Y_n^{\textnormal{ft}}),~\forall n \in \mathcal{N},
\end{align}
where $\mathcal{L}_n(\cdot)$ represents the employed loss function, and we do not specify it here for generality. The loss $L_n$ is transmitted to the server for gradient calculation and adapter updating.

\textbf{Updates.} Given the loss $L_n$ and weight matrices $W_0,\mathcal{A}_n=\{A_n^1,A_n^2\}$, the server can compute the gradient and update $\mathcal{A}_n$ into fine-tuned $\mathcal{A}_n^{\textnormal{ft}}$ following a standard fine-tuning process~\cite{panzade2024can}. 
For an arbitrary input $X_n^{\textnormal{input}}$, now the server could provide the prediction service as:
\begin{align}
    Y_n^{\textnormal{input}} = W_0X_n^{\textnormal{input}} + A_n^{1,\textnormal{ft}}A_n^{2,\textnormal{ft}}X_n^{\textnormal{input}},~\forall n \in \mathcal{N},
\end{align}
where $Y_n^{\textnormal{input}}$ is the prediction result after fine-tuning.

\subsection{Security Analysis}
Here, we preliminarily evaluate the security of PrivTuner. On the server side, the foundational model is retained by itself, with only prediction results shared with external devices, thus safeguarding model privacy.  On the mobile device side, private data is encrypted prior to transmission, ensuring there is no data exposure during fine-tuning, even in adversarial contexts. 
Losses are in plaintext but do not compromise privacy, as the server only handles ciphertext results.
We will delve deeper into potential attacks on FHE cryptosystems in Section~\ref{subsec:privacy_anlaysis}.

\section{Consumption and Privacy Protection Modelling} \label{secConsumption}
This section models the consumption and privacy protection within PrivTuner. Sec.~\ref{subsec:consumption_device} and Sec.~\ref{subsec:consumption_server} model the energy consumption on mobile devices and the server. Sec.~\ref{subsec:privacy_anlaysis} analyzes privacy protection against cryptographic attacks.

\subsection{Consumption on Mobile Devices}\label{subsec:consumption_device}
The consumption on mobile devices is mainly caused by the encryption operation in (\ref{equa:data}) and the transmission of encrypted data. 
For ease of analysis, we set coefficient moduli $\bm{q}$ in FHE to fixed large values to ensure sufficient arithmetic depth, while polynomial degrees $\bm{\lambda}$ are allowed to be adjusted as needed.

\textbf{Encryption Time.}
Let $D_n$ be the total number of samples in data $X_n^{\textnormal{ft}}$, and $s_n$ be the number of tokens per sample that need to be encrypted. Then the encryption time $T_{n}^{\textnormal{en}}$ is:
\begin{align}
    T_{n}^{\textnormal{en}} = \frac{y_1(\lambda_n)D_ns_n}{g_n},~\forall n \in \mathcal{N},
\end{align}
where $y_1(\lambda_n)$ denotes the required CPU cycles for encrypting per token under the polynomial modulus $\lambda_n$, and $g_n$ is the allocated computing capacity on device $n$.

\textbf{Encryption Energy.} Following~\cite{dinh2020federated},  we further define the energy consumption for encryption as follows:
\begin{align}
    E_{n}^{\textnormal{en}}=\kappa y_1(\lambda_n)s_nD_ng_n^2,~\forall n \in \mathcal{N},
\end{align}
where $\kappa$ is the effective switched capacitance that indicates the energy efficiency of the computation.

For the transmission periods, since the ciphertext data $\widetilde{X}_n^{\textnormal{ft}}$ is usually much larger than the plaintext loss $L_n$, we focus on the transmission of the previous one. Following~\cite{zhou2023resource}, we adopt Frequency Division Multiple Access (FDMA) for wireless transmission. In this context, the total available bandwidth is represented as $B^{\textnormal{total}}$. The bandwidth allocated to user $n$ is $B_n$ and we have $\sum_{n=1}^N B_n \leq B^{\textnormal{total}}$.
According to the Shannon formula, we could obtain the uplink transmission rate $r_n$ from user $n$ to the server as follows:
\begin{align}
r_n=B_n\log_2(1+\frac{p_nh_n}{N_0B_n}),~\forall n \in \mathcal{N},
\end{align}
where $B_n$ is the bandwidth as noted, $p_n$ is the transmission power of user $n$, $h_n$ is the channel attenuation from  user $n$ to the server, and $N_0$ is the noise spectral density. 

\textbf{Transmission Time.} The transmission time is given by:
\begin{align}
    T_{n}^{\textnormal{tr}} = \frac{d_{n}}{r_n},~\forall n \in \mathcal{N},
\end{align}
where $d_{n}$ (bits) is the size of the transmitted data. 

\textbf{Transmission Energy.} The transmission energy is:
\begin{align}
    E_{n}^{\textnormal{tr}} = p_nT_{n}^{\textnormal{tr}},~\forall n \in \mathcal{N}.
\end{align}

\subsection{Consumption on Server}\label{subsec:consumption_server}
The server in a network often houses abundant communication resources. Hence, our focus is primarily on the computational consumption of the server, i.e., the encrypted prediction process in Step 3 of Algorithm \ref{Algorithm:PrivTuner}.

Given that the prediction is supported by FHE, the computational demands on ciphertexts are inherently higher than those on plaintexts. This motivates the need to model and analyze the consumption metrics for further optimization. Unfortunately, a critical challenge we have to face is:
\begin{challenge}
    How to quantitatively measure the time and energy consumption for predictions when different computations are performed on ciphertexts?
\end{challenge}

Inspired by~\cite{marcano2019fully},
we estimate the total time and energy required for predictions by calculating the number of CPU cycles needed for various operations on ciphertexts. Specifically, we give the following definition:

\begin{definition}\label{def:1}
    The major operations involved in model prediction are addition, multiplication and rotation, accompanied by other customized operations, e.g., bootstrapping.
    Hence, the total number of CPU cycles $C^{\textnormal{total}}$ could be measured by:
    \begin{align}
        C^{\textnormal{total}} = C^{\textnormal{add}} + C^{\textnormal{mul}} +
        C^{\textnormal{rot}} +
        C^{\textnormal{other}},
    \end{align}
    where $C^{\textnormal{add}}$ accounts for the CPU cycles of all additions, $C^{\textnormal{mul}}$ for all multiplications, $C^{\textnormal{rot}}$ for all rotations, and $C^{\textnormal{other}}$ for cycles required by other operations.
\end{definition}

Since the server knows the model structure and the utilized FHE techniques, it can roughly count the number of additions, multiplications and rotations required for prediction. Hence, with $a_{n}$, $m_{n}$ and $o_n$ denoting the count of additions, multiplications, and rotations needed per sample, we estimate CPU cycles as follows:
\begin{align}
    y_2(\lambda_n) \!= \!y_3(\lambda_n)a_{n} \!+ \!y_4(\lambda_n)m_{n} \!+ \!y_5(\lambda_n)o_n \!+ \!C_n^{\textnormal{other}},~\forall n \in \mathcal{N}, \label{eqy2lambda}
\end{align}
where $y_2(\lambda_n)$ denotes the total CPU cycles under $\lambda_n$. Besides, $y_3(\lambda_n)$, $y_4(\lambda_n)$ and $y_5(\lambda_n)$ correspond to cycles for each addition, multiplication and rotation, respectively. $C_n^{\textnormal{other}}$ represents CPU cycles for other operations.
The expressions of $y_3(\lambda_n)$, $y_4(\lambda_n)$ and $y_5(\lambda_n)$ are discussed in Sec.~\ref{subsec:expression}.

\textbf{Computation Time.} Given $y_2(\lambda_n)$ as the required CPU cycles for the prediction process in step 2 of Algorithm \ref{Algorithm:PrivTuner}, and let $c_n$ be the CPU cycles for adapter updating in step 4 of Algorithm \ref{Algorithm:PrivTuner}, the total computation time on the server for device $n$ is defined as:
\begin{align}
    T_n^{\textnormal{cmp}} = \frac{y_{2}(\lambda_n)D_n}{f_n},~\forall n \in \mathcal{N},\label{equa:T_cmp}
\end{align}
where $D_n$ is the number of samples and $f_n$ denotes the allocated frequency to device $n$ on the server. 

\textbf{Computation Energy.} The computation energy is :
\begin{align}
E_n^{\textnormal{cmp}} =  \kappa y_2(\lambda_n) D_n f_n^2,~\forall n \in \mathcal{N}.\label{equa:E_cmp}
\end{align}

\subsection{Privacy Protection Level}\label{subsec:privacy_anlaysis}


To assess defenses against adversaries in FHE, we employ a metric of \textbf{security level}~\cite{albrecht2021homomorphic}, defined as the estimated number of operations (in bits) required to breach the cryptographic security. We consider three attack types: uSVP \cite{schnorr2003lattice}, BDD \cite{liu2013solving}, and hybrid dual \cite{albrecht2017dual}. The privacy protection of an FHE setting $(\lambda,q)$ is determined by the minimum security level across these attacks, each measured by the LWE-estimator \cite{albrecht2015concrete}.

However, owing to the intricacy of the computational model used by the LWE-estimator, directly quantifying the impact of an FHE setting on privacy protection is challenging. Recall that $q_n$ for each device $n$ has been fixed for ease of analysis, we define a function $y_5(\cdot)$ to describe the relationship between $\lambda_n$ and
privacy protection $\mathcal{S}_n$:
\begin{align}
    \mathcal{S}_n = y_6(\lambda_n),~\forall n \in \mathcal{N},
\end{align}
where $y_6(\lambda_n)$ is discussed in Sec.~\ref{subsec:expression}. 
We also consider that different devices could have different privacy concerns.
 With $\sigma_n$ serving as a weight parameter denoting the concern for the privacy protection of device $n$,
the overall privacy level is quantified as the sum of the individual ones of each device:
\begin{align}
\mathcal{S}^{\textnormal{total}}=\sum_{n=1}^N\sigma_n\mathcal{S}_n.
\end{align}

\section{Problem Formulation}\label{sec:problem}
In this section on problem formulation, Sec.~\ref{secOptimization}  presents a joint optimization problem to balance energy and privacy. Then, in Sec.~\ref{subsec:expression}, the expressions of functions used in the optimization problem are specified. Finally, in Sec.~\ref{secChallenges}, we discuss the challenges of solving the optimization problem.

\subsection{The Studied Optimization Problem} \label{secOptimization}
The proposed PrivTuner enhances security but also results in considerable energy consumption, especially for mobile devices with limited resources. 
Therefore, one of our goals is to minimize the overall energy consumption:
\begin{align}\label{obj:1}
\sum_{n=1}^N (E_{n}^{\textnormal{en}} +E_n^{\textnormal{tr}}+E_n^{\textnormal{cmp}}).
\end{align}\\
Concurrently, we also aim to maximize the privacy protection of the overall system, which can be expressed as:
\begin{align}\label{obj:2}
    \sum_{n=1}^N\sigma_n\mathcal{S}_n.
\end{align}

Synthesizing the above two objectives (\ref{obj:1}) and (\ref{obj:2}), a joint optimization problem can be formulated as follows:
\begin{subequations}\label{Original_Problem0}
\begin{align}
\hspace{75pt}&\hspace{-75pt}\mathbb{P}_0:  \min_{\bm{f},\bm{g},\bm{p},\bm{B},\bm{\lambda}}
\left\{\sum_{n=1}^N(E_{n}^{\textnormal{en}} \!+\!E_n^{\textnormal{tr}}\!+\!E_n^{\textnormal{cmp}})\!-\!\omega \sum_{n=1}^N\sigma_n\mathcal{S}_n\right\},
\tag{\ref{Original_Problem0}} \\
    \text{s.t.}~
    &  \sum_{n=1}^N f_{n} \leq f^{\textnormal{total}},\label{Constra:f} \\
    &  \sum_{n=1}^N B_{n} \leq B^{\textnormal{total}},\label{Constra:B} \\
    &  g_n \leq g^{\textnormal{max}}_n,\forall n \in \mathcal{N}, \label{Constra:g}\\
    &  p_n \leq p^{\textnormal{max}}_n,\forall n \in \mathcal{N}, \label{Constra:p}\\
    &  \lambda_n \in \{\lambda_{\textnormal{o}1},\ldots,\lambda_{\textnormal{o}m}\},\forall n \in \mathcal{N},\label{Constra:lambda}\\
    &  T_{n}^{\textnormal{en}}+T_{n}^{\textnormal{tr}} \leq T^{\textnormal{max}}_{D},\forall n \in \mathcal{N}, \label{Constra:T_s_max_0}\\
    &  T_n^{\textnormal{cmp}}\leq T^{\textnormal{max}}_{S},\forall n \in \mathcal{N},\label{Constra:T_u_max_0}
\end{align}
\end{subequations}
where $\boldsymbol{f},\boldsymbol{g},\boldsymbol{p},\boldsymbol{B}$ and $\boldsymbol{\lambda}$ are optimization variables. $\boldsymbol{f}=[f_1,\ldots,f_N]$ is a vector containing the allocated frequency to each device at the server, $\boldsymbol{g}=[g_1,\ldots,g_N]$ contains the frequency on each device, $\boldsymbol{p}=[p_1,\ldots,p_N]$ denotes the transmission power of devices, $\boldsymbol{B}=[B_1,\ldots,B_N]$ denotes the allocated bandwidth, and $\boldsymbol{\lambda}=[\lambda_1,\ldots,\lambda_n]$ represents the employed polynomial degrees. Besides, $\omega$ in (\ref{Original_Problem0}) is the weight parameter to balance the energy consumption and privacy protection. Constraints (\ref{Constra:f}) and (\ref{Constra:g}) limit the available computation frequencies allocated to device $n$ at the server and device $n$, respectively. 
Constraints (\ref{Constra:B}) and (\ref{Constra:p}) limit the available communication resources, i.e., transmission power and bandwidth.
Constraint (\ref{Constra:lambda}) provides $m$ choices for the polynomial modulus degrees: $\{\lambda_{\textnormal{o}1},\ldots,\lambda_{\textnormal{o}m}\}$, and ranks them in ascending order. Constraints (\ref{Constra:T_s_max_0}) and  (\ref{Constra:T_u_max_0}) set the maximum time budgets for both devices and the server. 

\subsection{Expressions of Functions used in the Optimization Problem}\label{subsec:expression}

To conduct a comprehensive analysis of the solution to $\mathbb{P}_0$, it is essential to specify the expressions of functions $y_1(\lambda_n)$, $y_3(\lambda_n)$, $y_4(\lambda_n)$, $y_5(\lambda_n)$, and $y_6(\lambda_n)$ (note that $y_2(\lambda_n)$ uses $y_3(\lambda_n)$, $y_4(\lambda_n)$ and $y_5(\lambda_n)$, as given in~(\ref{eqy2lambda})). We first establish the following Condition~\ref{condition:fitting} for those functions:
\begin{condition}\label{condition:fitting}
    The functions $y_1(\lambda_n)$, $y_3(\lambda_n)$, $y_4(\lambda_n)$, and $y_5(\lambda_n)$ are all convex with respect to the variable $\lambda_n$.
\end{condition}
\begin{remark}
    This assertion of convexity is based on empirical observations in \cite{murugesan2021analysis} and our experiments in Section~\ref{sec:experimental results}. 
    Convex forms simplify analysis and show satisfactory performance.
\end{remark}

We explored various convex forms for fitting.
For $y_1(\lambda_n)$, our experiments in Section~\ref{sec:experimental results} suggest a quadratic expression:
\begin{align}
    y_1(\lambda_n) = C_0(\lambda_n+C_1)^2,~\forall n \in \mathcal{N},
\end{align}
where $C_0,C_1>0$ are constants obtained from the curve fitting experiment, to be elaborated in Section~\ref{sec:experimental results}.

Regarding the ciphertext addition function $y_3(\lambda_n)$ and multiplication function $y_4(\lambda_n)$, experimental findings indicate a linear growth of computation cycles with respect to $\lambda_n$: 
\begin{align}
    y_3(\lambda_n) &= C_2\lambda_n+C_3,~\forall n \in \mathcal{N},\\
    y_4(\lambda_n) &= C_4\lambda_n+C_5,~\forall n \in \mathcal{N},\\
    y_5(\lambda_n) &= C_6\lambda_n+C_7,~\forall n \in \mathcal{N},
\end{align} 
where $C_2,C_4,C_5 > 0$, and $C_3,C_5,C_7$ are all constants derived from curve fitting. Their specific values are also presented in Section~\ref{sec:experimental results}. 

For function $y_6(\lambda_n)$, we also give the following condition:

\begin{condition}\label{condition:security}
    The function $y_6(\lambda_n)$ is concave regarding $\lambda_n$.
\end{condition}
\begin{remark}
    The concavity not only facilitates the solution but also means diminishing marginal gains, a characteristic commonly seen in diverse practical applications \cite{kumari2019fair}.
\end{remark}

Via experiments in Section~\ref{sec:experimental results}, we find constants $C_8>0$ and $C_9$ to curve-fit the privacy protection function $y_6(\lambda_n)$:
\begin{align}
    y_6(\lambda_n) = C_8\lambda_n+C_9,~\forall n \in \mathcal{N}.
\end{align} 


\subsection{Challenges of Solving the Optimization Problem $\mathbb{P}_0$ of (\ref{Original_Problem0})} \label{secChallenges}
$\mathbb{P}_0$ is difficult to solve due to the following two challenges:
\subsubsection{NP-Hardness}
The optimization variables in problem $\mathbb{P}_0$ form a highly coupled, inseparable mixed-integer non-linear programming (MINLP) problem, with discrete variable $\bm{\lambda}$ and continuous variables $(\bm{f}, \bm{g}, \bm{p}, \bm{B})$, making it NP-hard.

\subsubsection{Non-Convexity}
The terms $E_n^{\textnormal{en}}$ and $E_n^{\textnormal{ft}}$ in the objective function (\ref{Original_Problem0}) include non-convex products, while the term $E_n^{\textnormal{tr}}$ incorporates non-convex divisions, which can be verified by their Hessian matrices.  In addition, $T_n^{\textnormal{en}}$ in constraint (\ref{Constra:T_s_max_0}) and $T_n^{\textnormal{ft}}$ in (\ref{Constra:T_u_max_0}) are also non-convex ratios. The above non-convex terms result in the overall non-convexity of $\mathbb{P}_0$.


\section{Algorithm Design}\label{sec:algorithm}


This section introduces an iterative algorithm devised to solve $\mathbb{P}_0$ by alternately optimizing two different sets of variables. Each iteration comprises two stages below:
\begin{itemize}
    \item Stage 1: Given fixed $(\bm{p},\bm{B})$, optimize $(\bm{f},\bm{g},\bm{\lambda})$.
    \item Stage 2: Given obtained  $(\bm{f},\bm{g},\bm{\lambda})$, optimize $(\bm{p},\bm{B})$.
\end{itemize}
Stage 1 in Section~\ref{subsec:optimization_cmp} optimizes computation energy and privacy protection while transmission-related variables are fixed. Stage 2 in Section~\ref{subsec:optimization_tr} further optimizes the remaining variables based on the outcomes from Stage 1.
The overall RA algorithm and its analysis are detailed in Section~\ref{subsec:analysis}.

\subsection{Stage 1: Optimization in Computation Energy and Privacy Protection Level}\label{subsec:optimization_cmp}
In Stage 1, we first focus on the optimization of $(\bm{f},\bm{g},\bm{\lambda})$ to optimize computation energy consumption and privacy protection level. 
Given fixed $(\bm{p},\bm{B})$, the original problem $\mathbb{P}_0$ could be simplified into $\mathbb{P}_1$:
\begin{subequations}\label{Subproblem1}
\begin{align}
\mathbb{P}_1:
\min_{\bm{f},\bm{g},\bm{\lambda}}~&
\sum_{n=1}^N \Big(\kappa y_1(\lambda_n)D_ns_ng_n^2
+ \kappa y_2(\lambda_n)D_nf_n^2\Big)\nonumber\\
&- \omega\sum_{n=1}^N \sigma_n y_6(\lambda_n),
\tag{\ref{Subproblem1}} \\
\text{s.t.}~ & 
\text{(\ref{Constra:f})},~\text{(\ref{Constra:g})},~\text{(\ref{Constra:lambda})}\nonumber\\
& \frac{y_1(\lambda_n)D_ns_n}{g_n}+T_n^{\textnormal{tr}}\leq T^{\textnormal{max}}_D 
, ~\forall n \in \mathcal{N},\label{Constra:T_s_max}\\
& \frac{y_2(\lambda_n)D_n}{f_n}\leq T^{\textnormal{max}}_{S} 
, ~\forall n \in \mathcal{N}.\label{Constra:T_u_max}
\end{align}
\end{subequations}
Since we aim to minimize the computation energy consumption in (\ref{Subproblem1}), computing capacities  $\bm{f},\bm{g}$ need to be minimized as much as possible, i.e., the time budgets $T^{\textnormal{max}}_{D}$ and $T^{\textnormal{max}}_{S}$ in (\ref{Constra:T_s_max}) and (\ref{Constra:T_u_max}) will be exhausted. Thus, we can actually make constraints (\ref{Constra:T_s_max}) and (\ref{Constra:T_u_max}) into equations and obtain the solution of $\bm{f},\bm{g}$ as functions with regard to $\bm{\lambda}$:
\begin{subequations}
\begin{align}
&\overline{g}_n(\lambda_n) = \frac{y_1(\lambda_n)D_ns_n}{T^{\textnormal{max}}_D-T_n^{\textnormal{tr}}},~\forall n \in \mathcal{N}, \label{rela:f,lambda} \\
&\overline{f}_n(\lambda_n) =  \frac{y_2(\lambda_n)D_n}{T^{\textnormal{max}}_S},~\forall n \in \mathcal{N}. \label{rela:g,lambda}
\end{align}
\end{subequations}
It is evident that $\overline{f}_n(\lambda_n)$ in (\ref{rela:f,lambda}) and $\overline{g}_n(\lambda_n)$ in (\ref{rela:g,lambda}) are both monotonically increasing of $\lambda_n$.
Furthermore, their convexity is also established by verifying their Hessian matrices. Thus, constraints (\ref{Constra:f}) and (\ref{Constra:g}) are equivalent to:
\begin{subequations}
\begin{align}
& \sum_{n=1}^N \overline{f}_n(\lambda_n) \leq f^{\textnormal{total}},\label{Constra:lambda_1}\\
&\lambda_n \leq \lambda_{n}^{\textnormal{max}} = \sqrt{\frac{({T^{\textnormal{max}}_D-T_n^{\textnormal{tr}}})g_n^{\textnormal{max}}}{C_1D_ns_n}}-C_2. \label{Constra:lambda_2}
\end{align}
\end{subequations}
Substituting (\ref{rela:f,lambda}), (\ref{rela:g,lambda}), (\ref{Constra:lambda_1}) and (\ref{Constra:lambda_2}) into problem $\mathbb{P}_1$, we can obtain an equivalent problem $\mathbb{P}_2$ with only one variable $\bm{\lambda}$: 
\begin{align}
\mathbb{P}_2:
\min_{\bm{\lambda}}&
\sum_{n=1}^{N}\kappa y_1(\lambda_n)D_ns_n \overline{g}_n^2(\lambda_n)
-\omega \sum_{n=1}^{N}\sigma_ny_5(\lambda_n)
\nonumber\\
&+\sum_{n=1}^{N}\kappa y_2(\lambda_n)D_n\overline{f}_n^2(\lambda_n),
\label{Subproblem1:v1} \\
\text{s.t.}~ & 
(\ref{Constra:lambda}),
(\ref{Constra:lambda_1}),(\ref{Constra:lambda_2}).\nonumber
\end{align} 

Until now, $\mathbb{P}_2$ becomes an Integer Programming (IP) problem. Hence, we employ a \textbf{Branch and Bound (B\&B) algorithm}~\cite{zoppei2022branch} to solve it effectively. 
The whole progress to utilize the B\&B algorithm to solve $\mathbb{P}_2$ is generalized in \textbf{Algorithm~\ref{Algorithm:BB}}. 

\begin{algorithm}
\caption{B\&B Algorithm for problem $\mathbb{P}_2$} 
\label{Algorithm:BB}
\footnotesize
\KwIn{
INLP Problem $\mathbb{P}_2$ with objective function $f_{\mathbb{P}}()$ 
}
\KwOut{Optimal solution $\bm{\lambda}^*$.}

Relax the integer constraint (\ref{Constra:lambda}) of $\mathbb{P}_2$ to create $\mathbb{P}_3$;

Set $\mathcal{L} = \{\mathbb{P}_3\}$ and  $f_{\mathbb{P}}(\lambda^*)=\infty$; 

\While{$\mathcal{L} \neq \emptyset$}{
Select a problem $P(0)$ from $\mathcal{L}$ to explore;

Call (\ref{suboptimal:lambda}) to solve $P(0)\rightarrow \bm{\lambda}'$;

\uIf{$\bm{\lambda}'$ satisfies (\ref{Constra:lambda}) \textbf{and} $f_{\mathbb{P}}(\bm{\lambda}')<f_{\mathbb{P}}(\bm{\lambda}^*)$}{
Update $\bm{\lambda}^* \leftarrow \bm{\lambda}'$;
 }
\uElseIf{$\bm{\lambda}'$ is infeasible \textbf{or} $f_{\mathbb{P}}(\bm{\lambda}')\geq f_{\mathbb{P}}(\bm{\lambda}^*)$}{
    Fathom $P(0)$ from $\mathcal{L}$;
}
\Else{
\textit{Branch} $P(0)$ into $P(1)$, $P(2)$;

$\mathcal{L}=\mathcal{L} \cup \{P(1),P(2)\}$;

Fathom $P(0)$ from $\mathcal{L}$;
}

\For{$P$ in $\{P(1), P(2)\}$}{
Call (\ref{suboptimal:lambda}) to solve $P\rightarrow \bm{\lambda}'$;

\If{$f_{\mathbb{P}}(\bm{\lambda}')<f_{\mathbb{P}}(\bm{\lambda}^*)$}
{
Fathom $P$ from $\mathcal{L}$; 
}
}
}
\end{algorithm}

Since the B\&B algorithm is already well-researched, we do not give details here, but the key operations in Algorithm \ref{Algorithm:BB}: \textit{Problem Relaxation} (Line 2), \textit{Solution of the Relaxed Problem} (Line 6) and \textit{Branch} (Line 12), are detailed in \textbf{Appendix A}.

With obtained optimal $\bm{\lambda}^*$ from Algorithm~\ref{Algorithm:BB}, 
the optimal $\bm{f}^*$ and $\bm{g}^*$ can also be derived from (\ref{rela:f,lambda}) and (\ref{rela:g,lambda}):
\begin{align}
    &f^*_n = \frac{y_1(\lambda_n^*)D_ns_n}{T^{\textnormal{max}}_D-T_n^{\textnormal{tr}}},~\forall n \in \mathcal{N}, \label{optimal:f}\\
    &g^*_n =  \frac{y_2(\lambda_n^*)D_n}{T^{\textnormal{max}}_S},~\forall n \in \mathcal{N}.\label{optimal:g}
\end{align}

\subsection{Stage 2: Optimization in Transmission Energy}\label{subsec:optimization_tr}
In Stage 2, we focus on optimizing transmission-related variables $(\bm{p},\bm{B})$ to minimize the transmission energy consumption. 
With given $(\bm{f},\bm{g},\bm{\lambda})$, problem $\mathbb{P}_0$ could be simplified into $\mathbb{P}_4$:
\begin{subequations}\label{Subproblem2}
\begin{align}
\mathbb{P}_4: 
\min_{\bm{p},\bm{B}}~&\sum_{n=1}^N \frac{p_nd_{n}}{r_n},
\tag{\ref{Subproblem2}} \\
\text{s.t.}~
&(\ref{Constra:p}), (\ref{Constra:B}),\nonumber\\
& r_n^{\textnormal{min}} \leq r_n, \forall n \in \mathcal{N},\label{Constra:r}
\end{align}
\end{subequations}
where $r_n^{\textnormal{min}}=\frac{d_{n}}{T_S^{\textnormal{max}}-T_n^{\textnormal{en}}}$.
However, problem $\mathbb{P}_4$ is a difficult sum-of-ratio problem as validated by Lemma \ref{lemma1}:
\begin{lemma}\label{lemma1}
    For problem $\mathbb{P}_4$, we have:
    \begin{itemize}
        \item Rate $r_n$ is jointly concave with respect to $(p_n,B_n)$ ;
        \item $\frac{p_nd_{n}}{r_n}$ in (\ref{Subproblem2}) is a jointly pseudoconvex ratio.
    \end{itemize}
\end{lemma}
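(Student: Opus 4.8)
The plan is to establish the two bullet points separately, since they concern two different (though related) functions of $(p_n,B_n)$. For the first bullet, I would show that $r_n = B_n \log_2\!\big(1 + \frac{p_n h_n}{N_0 B_n}\big)$ is jointly concave in $(p_n, B_n)$ on the region $p_n, B_n > 0$. The standard route is to recognize $r_n$ as (a positive multiple of) the \emph{perspective} of the function $u \mapsto \log_2(1+u)$: indeed, $B_n \log_2(1 + \frac{p_n h_n}{N_0 B_n})$ is the perspective transform (in the variable $B_n$) of the concave function $\phi(p_n) = \log_2(1 + \frac{h_n}{N_0} p_n)$. Since the perspective operation preserves concavity, $r_n$ is jointly concave in $(p_n, B_n)$. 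Alternatively, if one prefers a self-contained argument, compute the Hessian of $r_n$ directly and verify it is negative semidefinite — a short calculation shows the Hessian has the form $-\frac{c}{B_n (N_0 B_n + p_n h_n)^2}\,\big(\begin{smallmatrix} 1 & * \\ * & * \end{smallmatrix}\big)$ with the appropriate structure, but invoking the perspective lemma is cleaner.

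For the second bullet, I would show that $\frac{p_n d_n}{r_n}$ is jointly pseudoconvex in $(p_n, B_n)$. Here $d_n > 0$ is a constant, so it suffices to handle $\frac{p_n}{r_n}$. The key structural observation is that the numerator $p_n$ is nonnegative, linear (hence convex) in $(p_n,B_n)$, and the denominator $r_n$ is positive and concave in $(p_n,B_n)$ by the first bullet. A ratio of a nonnegative convex function over a positive concave function is known to be pseudoconvex (this is a classical fact in fractional programming — see e.g. the characterization that such ratios have convex sublevel sets and their stationary points are global minima). I would cite or briefly reprove this: the sublevel set $\{\frac{p_n}{r_n} \le t\}$ equals $\{p_n - t\, r_n \le 0\}$, which is convex because $p_n - t r_n$ is convex in $(p_n,B_n)$ for every $t \ge 0$ (difference of a convex function and $t$ times a concave function); quasiconvexity follows, and pseudoconvexity then follows from the differentiability of $r_n$ and the fact that the denominator never vanishes on the feasible region.

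The main obstacle — really the only nontrivial point — is the first bullet, establishing joint concavity of $r_n$; everything in the second bullet is a formal consequence of it plus standard fractional-programming facts. If the perspective-function argument is deemed too terse for the paper's audience, the fallback is the explicit Hessian computation, which is routine but slightly tedious: one must compute the four second partials $\partial^2 r_n/\partial p_n^2$, $\partial^2 r_n/\partial B_n^2$, $\partial^2 r_n/\partial p_n \partial B_n$, and check that the diagonal entries are $\le 0$ and the determinant is $\ge 0$. I would present the perspective argument as the main proof and relegate the Hessian verification to a remark or appendix. One should also note explicitly that the relevant domain is $p_n > 0$, $B_n > 0$ (with the continuous extension $r_n \to 0$ as $B_n \to 0^+$), so that the perspective transform and the division are well-defined throughout the feasible set of $\mathbb{P}_4$.
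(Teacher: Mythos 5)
Your proposal is correct, and it reaches the same two conclusions as the paper, but the route for the first bullet is genuinely different. The paper proves joint concavity of $r_n$ by writing out the $2\times 2$ Hessian explicitly and asserting it is negative semidefinite; you instead invoke the perspective-transform argument, observing that $B_n\log_2\bigl(1+\tfrac{p_nh_n}{N_0B_n}\bigr)$ is the perspective of the concave function $p\mapsto\log_2\bigl(1+\tfrac{h_n}{N_0}p\bigr)$. Your route is cleaner and less error-prone (the paper's displayed Hessian in fact contains several typographical slips, e.g.\ stray exponents and $g_n$ where $h_n$ is meant), at the cost of assuming the reader knows the perspective lemma; the Hessian computation is more elementary and self-contained, which is presumably why the paper chose it. For the second bullet both you and the paper rely on the same classical fact --- a nonnegative convex (here affine) numerator over a positive concave denominator is pseudoconvex --- which the paper simply cites (Cambini--Martein, p.~245). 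One caution on your attempted ``brief reproof'': the sublevel-set argument $\{p_n - t\,r_n\le 0\}$ being convex for $t\ge 0$ establishes only \emph{quasi}convexity, and quasiconvexity plus differentiability does not in general yield pseudoconvexity; to get pseudoconvexity you must either invoke the cited theorem as a black box (as the paper does) or additionally verify that every stationary point of the ratio is a global minimizer. As written, your ``pseudoconvexity then follows from differentiability'' step is a small but genuine gap; closing it by citing the fractional-programming result directly would make the argument airtight.
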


\begin{proof}
    We calculate the Hessian matrix of $r_n$:
    \begin{align}
    H = 
    \begin{bmatrix}
    -\frac{h_n^2}{B_nN_0^2{(1+\frac{p_nh_n}{N_0B_n})}^2\ln{2}} & 
    \frac{p_n^D h_n^2}{B_n^2N_0^2{(1+\frac{p_ng_n}{N_0B_n})}^2\ln{2}} \\
    \frac{p_n h_n^2}{B_n^2N_0^2{(1+\frac{p_n^{2}g_n}{N_0B_n})}^2\ln{2}} & 
    -\frac{p_n^2h_n^2}{B_n^3N_0^2{(1+\frac{p_nh_n}{N_0B_n})}^2\ln{2}} 
    \end{bmatrix}
    \notag
\end{align}
$H$ is a negative semidefinite matrix, proving $r_n$ is jointly concave with $(p_n^D,B_n)$. The numerator $p_nd_{n}$ is an affine function. Thus, $\frac{p_nd_{n}}{r_n}$ is deemed as a pseudoconvex ratio according to Page~245 of the book~\cite{cambini2008generalized}. Lemma \ref{lemma1} is proved. 
\end{proof}

\begin{remark}
Based on Lemma \ref{lemma1}, term $\frac{p_nd_{n}}{r_n}$ is identified as jointly pseudoconvex. However, the sum of pseudoconvex functions may not retain pseudoconvexity, complicating the resolution of sum-of-ratio problems~\cite{jong2012efficient}.
\end{remark}

To efficiently tackle problem $\mathbb{P}_4$, we utilize an advanced fractional programming method \cite{zhao2023human} to transform it into an equivalent problem $\mathbb{P}_5$:
\begin{subequations}\label{Subproblem2_v1}
\begin{align}
\mathbb{P}_5:
\min_{ \bm{p}\bm{B}}~&\sum_{n=1}^N\big((p_nd_n)^2z_n+\frac{1}{4(r_n)^2z_{n}}\big),
\tag{\ref{Subproblem2_v1}} \\
\text{s.t.}~
& (\ref{Constra:p}),(\ref{Constra:B}),(\ref{Constra:r}). \nonumber
\end{align}
\end{subequations}
where $\bm{z}:=[z_1,\ldots,z_{N}]$ is the introduced auxiliary variable. It is worth noting that problem $\mathbb{P}_5$ is a convex problem now and can be solved by analyzing its KKT Conditions. 
The process of utilizing this method to update $\bm{z}$ and solve $\mathbb{P}_5$ is listed in \textbf{Algorithm \ref{Algorithm:Fractional_Programming}}. It starts by calculating the initial values of auxiliary variables $\bm{z}^{(0)}$, then alternatively refines $(\bm{p},\bm{B})$ and $\bm{z}$ until convergence or reaching maximum iteration number $I$.

\begin{algorithm}
\footnotesize
\caption{Fractional Programming Method} 
\label{Algorithm:Fractional_Programming}
\KwIn{
Initial feasible solution $(\bm{p}^{(0)},\bm{B}^{(0)})$;
\newline
Maximum iteration number $I$;
\newline
Error parameter $\epsilon \in (0,1)$.
}
\KwOut{Optimal solution 
$(\bm{p}^{*},\bm{B}^{*})$.}

\SetKwProg{MyFunction}{Function}{:}{end}
\MyFunction{\underline{Fractional\_Programming$(\bm{p}^{(0)},\bm{B}^{(0)},I,\epsilon)$}}{
Compute initial values of  $\bm{z}^{(0)}$, where:
\begin{align}
    z_n^{(0)} &= \frac{1}{2p_n^{(0)}d_{n}r_n\big(p_n^{(0)},B_n^{(0)}\big)}, \nonumber
\end{align}

Set iteration counter $i=1$.

\While{\underline{$i \leq I$ and not convergence}}{
Solve $\mathbb{P}_5$ to obtain $(\bm{p}^{(i)},\bm{B}^{(i)})$ according to \textbf{Theorem \ref{theorem:optimal_p_B}} on page \pageref{pagetheorem:optimal_p_B} with given $\bm{z}^{(i-1)}$.

Update auxiliary variables $\bm{z}^{(i)}$, where:
\begin{align}
    z_n^{(i)} = \frac{1}{2p_n^{(i)}d_{n}r_n\big(p_n^{(i)},B_n^{(i)}\big)},~\forall n \in \mathcal{N}.\nonumber
\end{align}

If $|\bm{z}^{(i)}-\bm{z}^{(i-1)}| \leq \epsilon $, the algorithm terminates.

Let $i \leftarrow i+1$.
}
}
\end{algorithm}
Until now, what we need to focus on is how to solve problem $\mathbb{P}_4$ when $\bm{z}$ is already given based on Algorithm \ref{Algorithm:Fractional_Programming}. We provide the following Theorem \ref{theorem:optimal_p_B} to obtain the optimal $(\bm{p}^*,\bm{B}^*)$:

\begin{theorem}\label{theorem:optimal_p_B}
    The optimal solution $(p_n^*,B_n^*)$ of problem $\mathbb{P}_4$ can be computed by suitably defined functions $\overline{B}_n(\cdot,\cdot)$ and $\overline{p}_n(\cdot,\cdot)$ such that:
    \begin{align}
        B_n^* &= \overline{B}_n(\beta^*,\gamma_n^*) ~(\text{i.e., (\ref{equa:optimal_B})}), \\
        p_n^{*} &=  \overline{p}_n(B_n^*,\gamma_n^*) ~(\text{i.e., (\ref{equa:optimal_p})}), 
    \end{align}
    where $\gamma_n^*$ is obtained from (\ref{optimal:gamma}) and $\beta^*$ is the solution to (\ref{temp:2}).
\end{theorem}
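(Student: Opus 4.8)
The plan is to solve the convex problem $\mathbb{P}_5$ (which is equivalent to $\mathbb{P}_4$ for the current auxiliary values $\bm{z}$) via its KKT conditions, exploiting the separability of the objective across devices coupled only through the bandwidth budget $\sum_n B_n \le B^{\textnormal{total}}$ and the per-device rate floor $r_n^{\textnormal{min}} \le r_n$. First I would write the Lagrangian of $\mathbb{P}_5$ with multiplier $\beta \ge 0$ for the bandwidth constraint~(\ref{Constra:B}), multipliers $\gamma_n \ge 0$ for the rate constraints~(\ref{Constra:r}), and multipliers for the box constraint $p_n \le p_n^{\textnormal{max}}$ from~(\ref{Constra:p}). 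Since $\mathbb{P}_5$ is convex (the numerator $(p_nd_n)^2 z_n$ is convex in $p_n$, and $\tfrac{1}{4 r_n^2 z_n}$ is convex because $r_n$ is jointly concave and positive by Lemma~\ref{lemma1}, so its reciprocal square is convex), the KKT conditions are necessary and sufficient.

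Next I would take the stationarity condition $\partial \mathcal{L}/\partial p_n = 0$. For fixed $B_n$ and fixed multipliers, this is a one-dimensional equation in $p_n$ whose left side is monotone; solving it yields $p_n$ as an explicit function $\overline{p}_n(B_n,\gamma_n)$ — this is equation~(\ref{equa:optimal_p}) referenced in the statement. Substituting $\overline{p}_n(B_n,\gamma_n)$ back into the stationarity condition $\partial \mathcal{L}/\partial B_n = 0$ reduces that to an equation in $B_n$ alone (given $\beta$ and $\gamma_n$), which can again be inverted to give $B_n = \overline{B}_n(\beta,\gamma_n)$, i.e.\ equation~(\ref{equa:optimal_B}). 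The per-device multiplier $\gamma_n^*$ is then pinned down by complementary slackness: either the rate constraint~(\ref{Constra:r}) is inactive and $\gamma_n^* = 0$, or it is active ($r_n = r_n^{\textnormal{min}}$), and combining the active constraint with $\overline{p}_n,\overline{B}_n$ gives the closed form~(\ref{optimal:gamma}) for $\gamma_n^*$. Finally, the global multiplier $\beta^*$ is determined by enforcing the bandwidth budget with equality (it is easy to argue the budget is tight, since shrinking all $B_n$ would only help reduce $p_n$ for fixed rate, so every KKT point uses the full bandwidth), i.e.\ $\sum_n \overline{B}_n(\beta^*,\gamma_n^*) = B^{\textnormal{total}}$, which is the scalar equation~(\ref{temp:2}); monotonicity of $\overline{B}_n$ in $\beta$ guarantees a unique root, computable by bisection.

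The main obstacle I anticipate is handling the rate-floor constraints~(\ref{Constra:r}) cleanly within the KKT system: the coupling means one cannot decide a priori which subset of devices has its rate constraint active, so strictly one faces a combinatorial case split over active sets. I would sidestep this by noting that for each device the decision is local once $\beta$ is fixed — given $\beta$, one checks whether the unconstrained ($\gamma_n = 0$) solution already satisfies $r_n \ge r_n^{\textnormal{min}}$; if so keep it, otherwise set $r_n = r_n^{\textnormal{min}}$ and use~(\ref{optimal:gamma}) — so the whole system collapses to the single outer equation in $\beta$. A secondary technical point is verifying that $\overline{p}_n$ respects the power cap $p_n \le p_n^{\textnormal{max}}$; if it is violated one clamps $p_n = p_n^{\textnormal{max}}$ and re-derives $B_n$ from the rate expression, which is a routine but necessary branch. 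With these pieces assembled, the stated functional characterization of $(p_n^*, B_n^*)$ follows directly from the necessity and sufficiency of the KKT conditions for the convex program $\mathbb{P}_5$.
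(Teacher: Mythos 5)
Your proposal is correct and follows essentially the same route as the paper's Appendix B: form the (partial) Lagrangian of the convex reformulation $\mathbb{P}_5$, solve the $p_n$-stationarity for $\overline{p}_n(B_n,\gamma_n)$ clamped at $p_n^{\textnormal{max}}$, substitute into the $B_n$-stationarity to get $\overline{B}_n(\beta,\gamma_n)$, determine $\gamma_n^*$ by the per-device complementary-slackness case split, and fix $\beta^*$ by bisection on the tight bandwidth budget. The only quibble is that your justification for tightness is stated backwards --- enlarging (not shrinking) $B_n$ reduces the power needed at a fixed rate, and the paper instead reads $\beta>0$ directly off the $B_n$-stationarity condition --- but this does not affect the structure or validity of the argument.
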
\label{pagetheorem:optimal_p_B}
\begin{proof}
    The proof of Theorem \ref{theorem:optimal_p_B} is in Appendix B.
\end{proof}

\subsection{A Review of Proposed Resource Allocation Algorithm}\label{subsec:analysis}

The overall joint resource allocation algorithm is presented in Algorithm \ref{algo:resourceallocation}. It starts by initializing a feasible solution, and then alternatively optimizing $(\bm{f},\bm{g},\bm{\lambda})$ and $(\bm{p},\bm{B})$ until convergence or reaching maximum iteration number.

\begin{algorithm}
\footnotesize
\caption{Joint Resource Allocation Algorithm} 
\label{algo:resourceallocation}
\KwIn{Initial $sol^{(0)}\!=\!(\bm{p}^{(0)},\bm{B}^{(0)},\bm{f}^{(0)},\bm{g}^{(0)},\bm{\lambda}^{(0)})$;
\newline
Maximum iteration number $J$; 
\newline
Error parameter $\epsilon \in (0,1)$.}

\KwOut{Optimal solution $sol^{*}\!=\!(\bm{p}^{*},\bm{B}^{*},\bm{f}^{*},\bm{g}^{*},\bm{\lambda}^{*})$.}

\SetKwProg{MyFunction}{Function}{:}{end}

\MyFunction{\underline{Joint\_Optimization $(sol^{(0)},J,\epsilon)$}}{

Set iteration counter $j=1$.

\While{\underline{$j \leq J$ and not convergence}}{

Given $(\bm{p}^{(j-1)},\bm{B}^{(j-1)})$, solve problem $\mathbb{P}_5$ based on Algorithm \ref{Algorithm:BB}, (\ref{optimal:f}) and (\ref{optimal:g}) to obtain $(\bm{f}^{(j)},\bm{g}^{(j)},\bm{\lambda}^{(j)})$.

Given $(\bm{f}^{(j)},\bm{g}^{(j)},\bm{\lambda}^{(j)})$, solve $\mathbb{P}_5$ based on Algorithm \ref{Algorithm:Fractional_Programming} and Theorem \ref{theorem:optimal_p_B} 
to obtain  $(\bm{p}^{(j)},\bm{B}^{(j)})$.

Update the solution:
\begin{align}
sol^{(j)} \leftarrow (\bm{p}^{(j)},\bm{B}^{(j)},\bm{f}^{(j)},\bm{g}^{(j)},\bm{\lambda}^{(j)}). \nonumber
\end{align}

If $|sol^{(j)}-sol^{(j-1)}| \leq \epsilon $, the algorithm terminates.

Set $j \leftarrow j+1$.
}
}
\end{algorithm}

\textbf{Time Complexity.}
We provide a time complexity analysis of Algorithm \ref{algo:resourceallocation} focusing on the number of devices $N$. The complexity primarily stems from two stages in each iteration: Lines 4 and 5.
Line 4 is to solve for  $(\bm{f},\bm{g},\bm{\lambda})$ by calling the B\&B algorithm, (\ref{optimal:f}) and (\ref{optimal:g}).
In B\&B algorithms, with $m$ branches from each node and $N$ as the solution depth, the worst-case computational complexity is $\mathcal{O}(m^N)$, as same as the brute force. However, the pruning operation effectively reduces the actual computation time. If we utilize $K$ as the estimated running times, and each solution actually takes $\mathcal{O}(N)$, the total time complexity could be regarded as $\mathcal{O}(3KN)$. The solution of $(\bm{p},\bm{B})$ in (\ref{optimal:f}) and (\ref{optimal:g}) takes $\mathcal{O}(2N)$.
Line $5$ is to alternatively update $(\bm{z})$ and $(\bm{p},\bm{B})$ by calling Algorithm \ref{Algorithm:Fractional_Programming} and Theorem \ref{theorem:optimal_p_B}.
Theorem \ref{theorem:optimal_p_B} incurs a complexity of 
$\mathcal{O}(3N)$ from the calculation of multipliers and solutions.
For Algorithm \ref{Algorithm:Fractional_Programming}, given maximum iteration number $I$, Line 2 and Lines 4-9 contribute $\mathcal{O}(N)$ and $\mathcal{O}(I(4N))$, respectively. Hence, the total complexity of Line 5 is $\mathcal{O}(4(I+1)N)$. 
Finally, with the maximum iteration number $J$, we can conclude the overall time complexity of Algorithm \ref{algo:resourceallocation} is $\mathcal{O}(J(3K+4I+6)N)$. 

\textbf{Solution Quality and Convergence.}
Algorithm \ref{algo:resourceallocation} iteratively optimizes two sets of variables, $(\bm{\lambda},\bm{f},\bm{g})$ and $(\bm{p},\bm{B})$, by alternately solving problems $\mathbb{P}_2$ and $\mathbb{P}_5$. Algorithm \ref{Algorithm:BB} ensures the global optimum for $\mathbb{P}_2$. Algorithm \ref{Algorithm:Fractional_Programming} and Theorem \ref{theorem:optimal_p_B} guarantee global optima to $\mathbb{P}_5$. Consequently, while Algorithm \ref{algo:resourceallocation} does not guarantee a globally optimal solution to the original problem $\mathbb{P}_0$, it does secure the global optimality of each subproblem's solution. The precision parameter 
$\epsilon$ in Algorithm \ref{algo:resourceallocation} ensures that the solution converges to the desired level of accuracy. The convergence is also clear from the above analysis.

\section{Experiments}\label{sec:experimental results}

\subsection{Macrobenchmark}
We implement the CKKS-supported BERT-Tiny in C++ and Python, utilizing the OpenFHE~\cite{OpenFHE} library. Specifically, we evaluate the experiments on a Linux x86\_64 machine equipped with Intel(R) Xeon(R) Gold 5218R CPU@2.10GHz. In addition, we also use HEXL~\cite{boemer2021intel} to accelerate OpenFHE. Following Rovida \textit{et al.}~\cite{rovida2024transformer}, we configure the polynomial degree $\lambda=2^{15}$, coefficient modulus $q=1767$ bits, with a precision factor of $55$ bits. 
We evaluated the model accuracy on three GLUE datasets: SST-2, MRPC and RTE. The batch size is set as $32$; the learning rate is $1\times10^{-5}$, and
the training epoch is $10$. The accuracy performance is reported in Table~\ref{Tab:accuracy_fhe}. 
Besides, we also provide a computation runtime breakdown when doing inference on one sample of $8$ tokens from the SST-2 dataset, which is as shown in Table~\ref{Tab:overheads_fhe}. 

\begin{table}[h]
  \caption{Test performance on different GLUE datasets.}
  \label{Tab:accuracy_fhe}
  \centering
  \resizebox{\linewidth}{!}{ 
  \begin{tabular}{l|c|c|c}
    \hline
    Dataset& BERT-Tiny &FHE-BERT-Tiny &Loss~$\downarrow$ \\\hline
    SST-2 (Acc.) &0.823 &0.790 &0.033 \\
    MRPC (Acc.) &0.703 & 0.675 &0.028 \\
    WNLI (Acc.) &0.601 &0.564 &0.037\\
    CoLA (Acc., M Corr.) &0.691, 0 &0.691, 0 &0.000, 0\\
    \hline
  \end{tabular}
}
\end{table}


\begin{table}[h]
  \caption{A computation runtime breakdown for inference.}
  \label{Tab:overheads_fhe}
  \centering
  \resizebox{.75\linewidth}{!}{ 
  \begin{tabular}{l|cc|c}
    \hline
      Operation& Time Consumption (s) \\\hline
    Encryption (client)  &0.7106     \\
    Prediction (server)  &163.3211   \\
    Decryption (client)  &0.0119    \\\hline
    Total  &164.0436   \\\hline
  \end{tabular}
  } 
\end{table}




Besides, since we aim to optimize the polynomial degree $\lambda$ in CKKS to balance the overhead and security, we also report the runtime and security level under different $\lambda$.
From Table~\ref{Tab:overheads_comparison}, we can see that a higher $\lambda$ increases the runtime obviously, but also benefits the security level. We also want to emphasize that even if the overall runtime is often hundreds of seconds, the primary bottleneck lies in the server-side encrypted predictions. In contrast, client-side tasks are limited to encryption and decryption, keeping the computational demands relatively manageable for mobile devices (Table~\ref{Tab:overheads_fhe}).

\begin{table}[h]
  \caption{Runtime and security level under different $\lambda$.}
  \label{Tab:overheads_comparison}
  \centering
  \resizebox{.9\linewidth}{!}{
  \begin{tabular}{l|c|cc}
    \hline
      Method & $\lambda$ &Runtime & Security Level\\\hline
    \multirow{3}{*}{FHE-BERT-Tiny}  &$2^{15}$ &164.04 s & 66.1\\
      &$2^{16}$ &330.13 s & 128.4 \\
     &$2^{17}$ &719.64 s & 277.0 \\\hline
    PEFT (LoRA) &-  &0.067 s &0 (unprotected) \\\hline
\end{tabular}
}
\end{table}

\subsection{Curve Fitting}\label{subsec:curve}
Here, we test the number of CPU cycles required for different operations: encryption, addition,  multiplication and rotation. It is also worth noting that CKKS is not the only option, and our analysis can be adapted to other FHE mechanisms. Specifically, we measure the number of required CPU cycles by obtaining the average running time and computation frequency. The curve fitting results are shown in Fig. \ref{fig:FHE}(a).
To fit the privacy protection level, we run the LWE-estimator under three attacks and give the results in Table~\ref{table:security}. The minimum one of three attacks is adopted as the corresponding privacy protection. Fig.~\ref{fig:FHE}(b) presents the curve fitting performance.

\begin{figure}[!t]
    \centering
    \begin{subfigure}[b]{0.48\linewidth}
        \centering
\includegraphics[width=\linewidth]{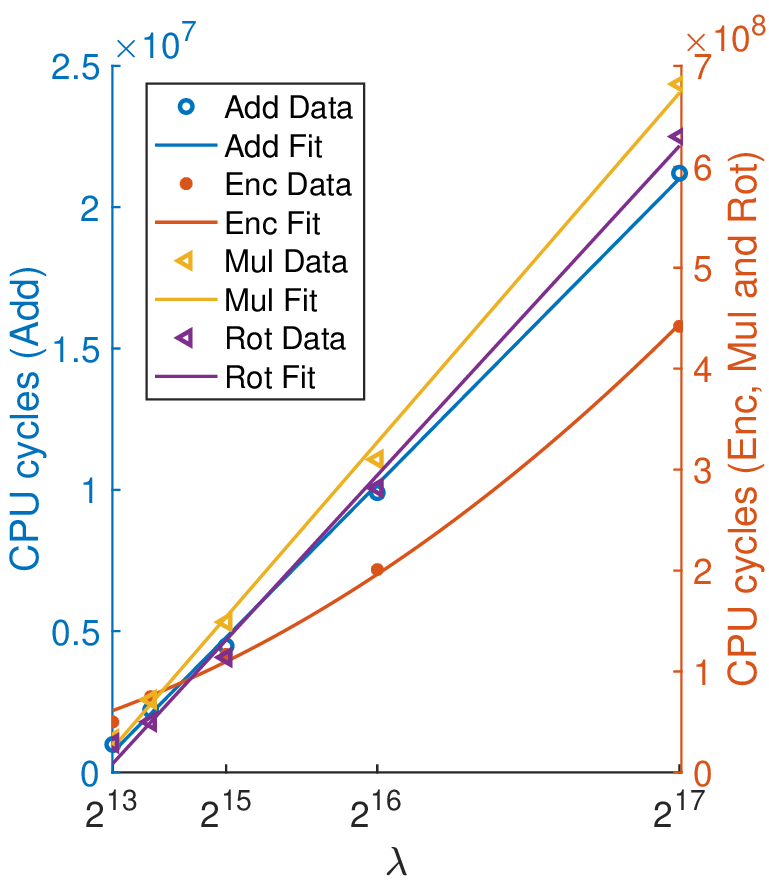}
    \caption{The CPU cycles for each Addition (Add), Encryption (Enc), Multiplication (Mul) and Rotation (Rot) under different $\lambda$.}
    \end{subfigure}
    \hspace{3pt}
    \begin{subfigure}[b]{0.48\linewidth}
        \centering
\includegraphics[width=\linewidth]{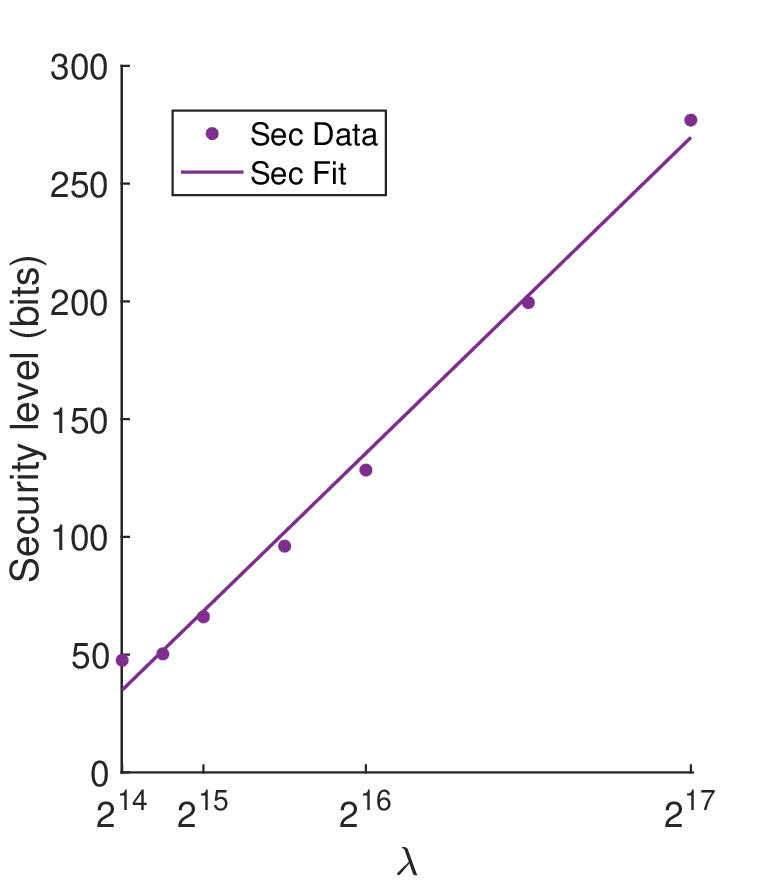}
    \caption{Privacy protection level measured by the LWE-estimator under different $\lambda$ from Table \ref{table:security} and the fitting performance.}
    \end{subfigure}
    \caption{Fitting results under different polynomial degrees $\lambda$.}
    \label{fig:FHE}
\end{figure}

\begin{table}[ht]
\centering
\caption{
Estimated running times
(in bits) for three attacks under different $\lambda$ with $q=1767$ bits in CKKS.}
\label{table:security}
\begin{tabular}{@{}c|ccc@{}}
\toprule
{Polynomial modulus $\lambda$}& {uSVP}~\cite{schnorr2003lattice} & {BDD}~\cite{liu2013solving} & {hybrid dual}~\cite{albrecht2017dual}\\
\midrule
16384    & \textbf{47.6} &  \textbf{47.6} & 48.6  \\
24576    & 50.5 &  \textbf{50.3} & 50.9 \\
32768    & 66.3 &  \textbf{66.1} & 66.6  \\
49152    & \textbf{96.1} &  \textbf{96.1} & 96.7  \\
65536    & 128.5 &  \textbf{128.4} & 129.4  \\
98304    & 199.7 &  \textbf{199.5} & 200.7  \\
131072    & \textbf{277.0} &  296.6 & 278.1  \\
\bottomrule
\end{tabular}
\end{table}
The fitted functions and corresponding constants obtained from the curve fitting are listed in Table \ref{tab:function_fitting}.
\begin{table}[h]
    \centering
    \caption{Result of fitted functions.}
    \label{tab:function_fitting}
    \begin{tabular}{l|l}
    \hline
        Functions & 
        Expressions of fitted functions
        \\ \hline
        Encryption & 
        $C_0(\lambda+C_1)^2$, where  $C_0=0.012$, $C_1=6.45\times {10}^4$
          \\ 
        Addition&  
        $C_2\lambda+C_3$, where  $C_2=165.15$, $C_3=-6.34\times {10}^5$
          \\ 
        Multiplication &
        $C_4\lambda+C_5$, where  $C_4=5282.55$, $C_5=-1.91\times {10}^7$
          \\ 
        Rotation &
        $C_6\lambda+C_7$, where  $C_6=4979.69$, $C_7=-3.19\times {10}^7$
          \\ \hline
        Security level &
        $C_8\lambda+C_9$, where  $C_8=0.0020$, $C_9=1.4789$
          \\\hline
    \end{tabular}
\end{table}

\subsection{Parameter Settings}\label{subsec:setting}
We provide a detailed description of the experiments' default parameter settings. The total number of mobile devices $N$ is $10$. The number of polynomial degree options $M$ is $3$ and we have $\{\lambda_{\textnormal{o}1},\lambda_{\textnormal{o}2},\lambda_{\textnormal{o}3}\}=\{2^{15},2^{16},2^{17}\}$. 
The number of samples $D_n$ is 16, and the number of tokens $s_n$ for encryption is 10.
The estimated number of additions $a_n$, multiplications $m_n$, and rotations $o_n$ required for prediction are set to $1660$, $260$ and $1460$ (Please see Appendix~\ref{Appendix:count} for details).

The number of CPU cycles $C_n^{\textnormal{other}}$ for other operations is set as $10^9$. The size of encrypted data for transmission $d_n$ is $3\times10^9$ bits.
The privacy concern $c_n$ of each device is randomly selected from $\{10,5,1\}$, and the weight parameter   $\omega$ for the overall privacy level is set to $1$ (unless configured otherwise). 

Then, we introduce settings regarding computation and communication resources. The total server computation capacity $f^{\textnormal{total}}$ is $20$ GHz. The maximum device frequency $g_n^{\textnormal{max}}$ is $3$ GHz.
The path loss model is $128.1+37.6\log(\text{distance})$, with an $8$ dB standard deviation for shadow fading, and the unit of distance is kilometer. The noise power spectral density $N_0$ is $-174$ dBm/Hz. The effective switched capacitance $\kappa$ is $10^{-28}$. 
The total available band width  $B^{\textnormal{total}}$ is $10$ MHz. The maximum device transmission power $p_n^{\textnormal{max}}$ is $20$ dBm.
The maximum time budget on devices $T^{\textnormal{max}}_D$ is $1000$ s, and the time budget on the server $T^{\textnormal{max}}_S$ is set as $3000$ s (unless configured otherwise). 

\subsection{Comparison with Benchmarks on Energy Optimization}\label{subsec:comparison}
In this subsection, we mainly explore the proposed algorithm's performance in terms of energy optimization. We first introduce three benchmarks as follows:
\begin{enumerate}
    \item \textbf{Average Allocation:} 
    In this case, we set up the resources of the server and mobile devices fairly. Specifically, we set each $f_n,g_n,p_n,B_n$ as $f^{\textnormal{total}}/N,g_n^{\textnormal{max}}/2,p_n^{\textnormal{max}}/2,B^{\textnormal{total}}/N$, respectively.
    Besides, the choices of $\bm{\lambda}$ are all set to $\lambda_{\textnormal{o}1}$.
    \item  \textbf{Optimize $\bm{f},\bm{g}$ only:} Here we fix each $p_n,B_n$ and $\lambda_n$ be $p_n^{\textnormal{max}}/2,B^{\textnormal{total}}/N$ and $\lambda_{\textnormal{o}1}$, respectively. Then, we only optimize variables $\bm{f},\bm{g}$ as discussed in Sec.~\ref{subsec:optimization_cmp}.
    \item  \textbf{Optimize $\bm{p},\bm{B}$ only:} Here we configure each  $f_n, g_n$ and $\lambda_n$ as $f^{\textnormal{total}}/N,g_n^{\textnormal{max}}/2$ and $\lambda_{\textnormal{o}1}$, respectively. Then, we only optimize variables $\bm{p},\bm{B}$ as discussed in Sec.~\ref{subsec:optimization_tr}.
\end{enumerate}
For a fair comparison, we temporarily set the weight parameter $\omega$ for privacy protection in the proposed algorithm to 0, i.e., all $\lambda_n$ are also $\lambda_{\textnormal{o}1}$. Then we conduct simulations under the following scenarios with different resources or budgets.

\begin{figure*}
    \centering
    \begin{subfigure}[b]{0.28\linewidth}
        \centering
    \includegraphics[width=\linewidth]
        {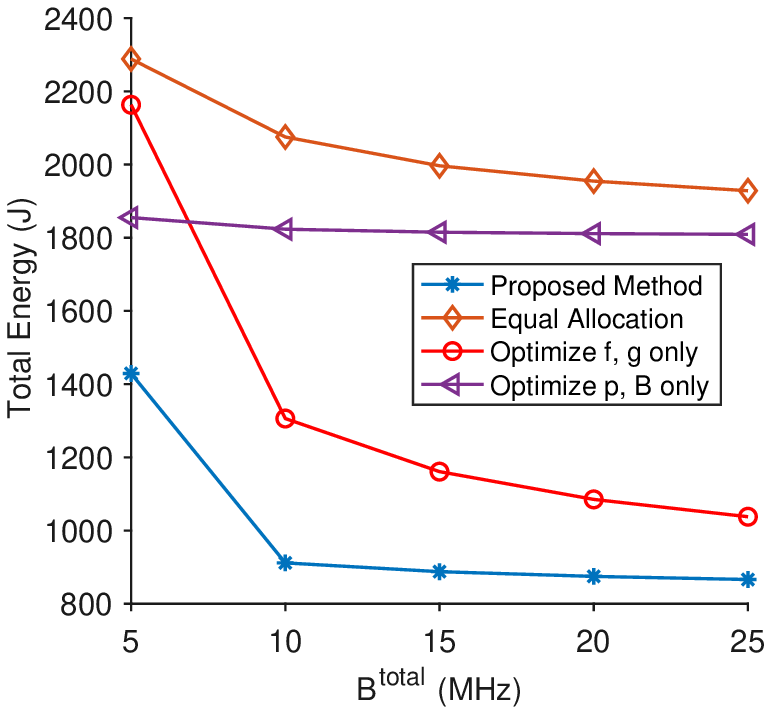}
        \caption{}
    \end{subfigure}
    \hspace{20pt}
    \begin{subfigure}[b]{0.28\linewidth}
        \centering
        \includegraphics[width=\linewidth]
        {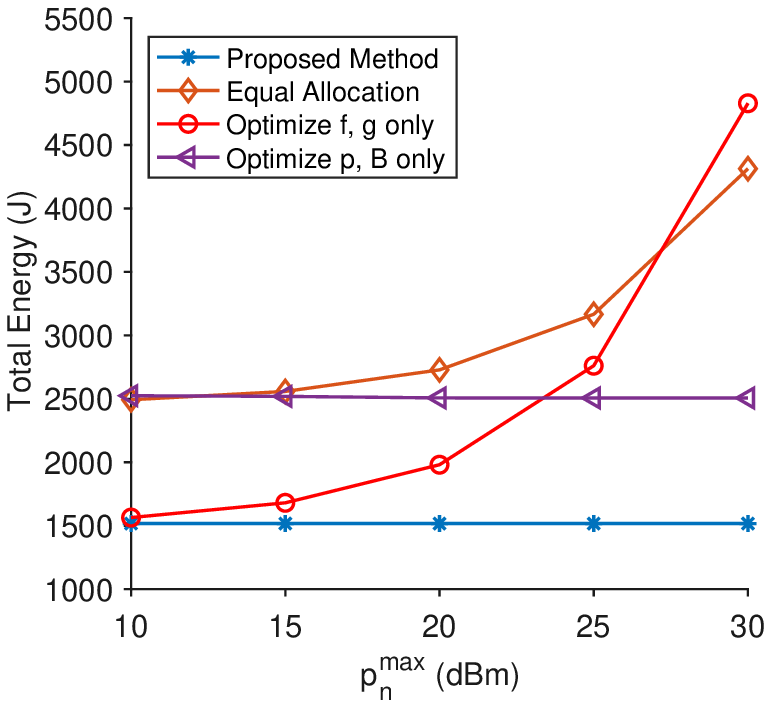}
    \caption{}
    \end{subfigure}
    \hspace{20pt}
    \begin{subfigure}[b]{0.28\linewidth}
        \centering
    \includegraphics[width=\linewidth]
        {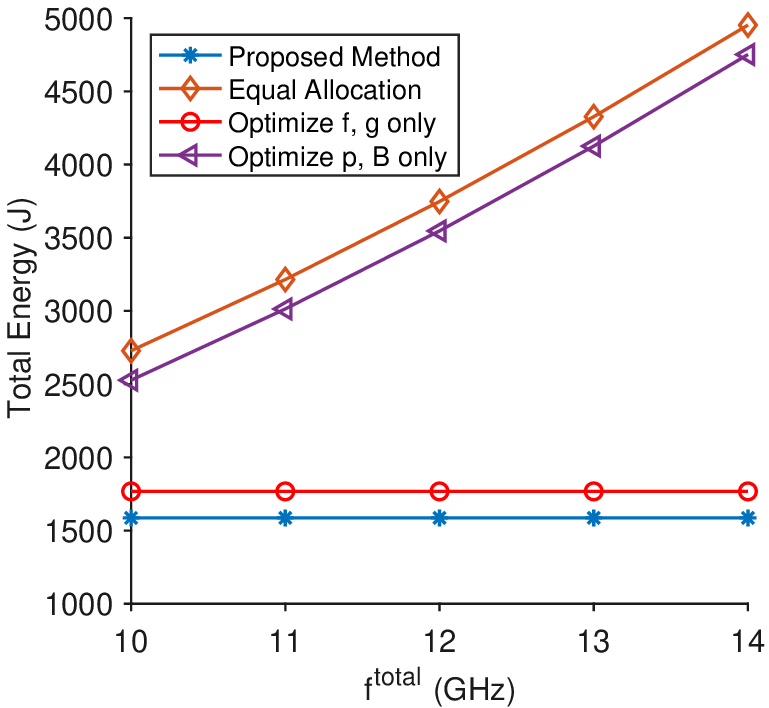}
        \caption{}
    \end{subfigure}\\
    \begin{subfigure}[b]{0.28\linewidth}
        \centering
        \includegraphics[width=\linewidth]
        {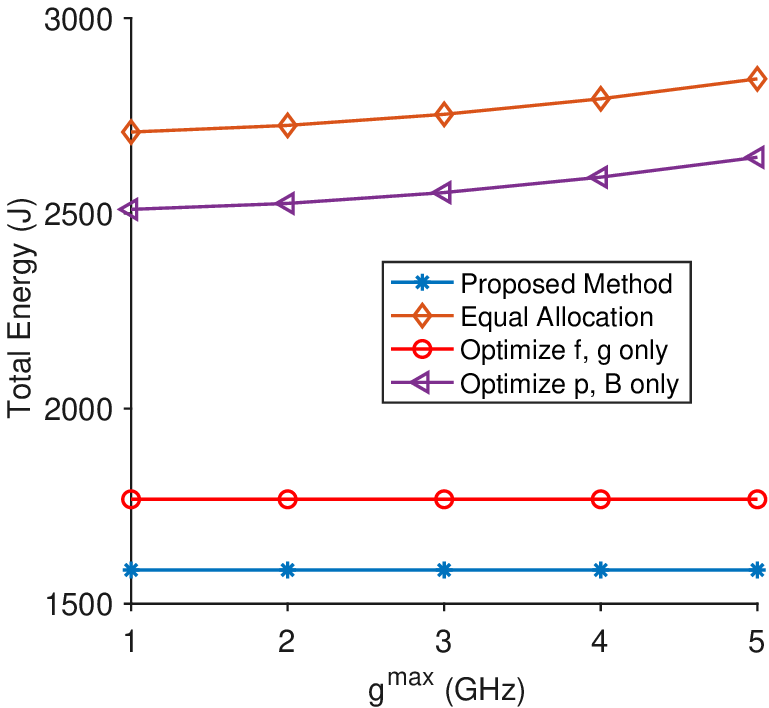}
    \caption{}
    \end{subfigure}
    \hspace{20pt}
    \begin{subfigure}[b]{0.28\linewidth}
        \centering
    \includegraphics[width=\linewidth]
        {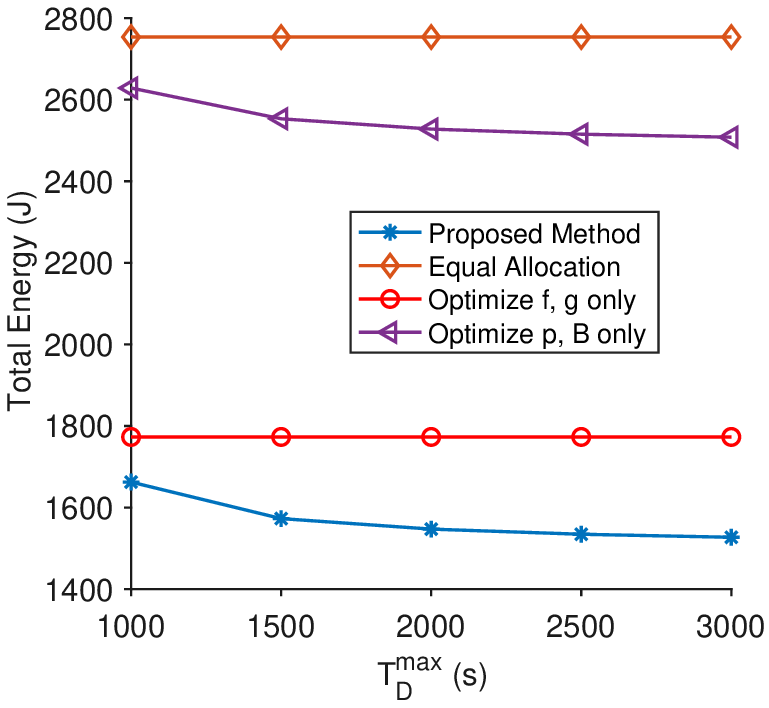}
        \caption{}
    \end{subfigure}
    \hspace{20pt}
    \begin{subfigure}[b]{0.28\linewidth}
        \centering
        \includegraphics[width=\linewidth]
        {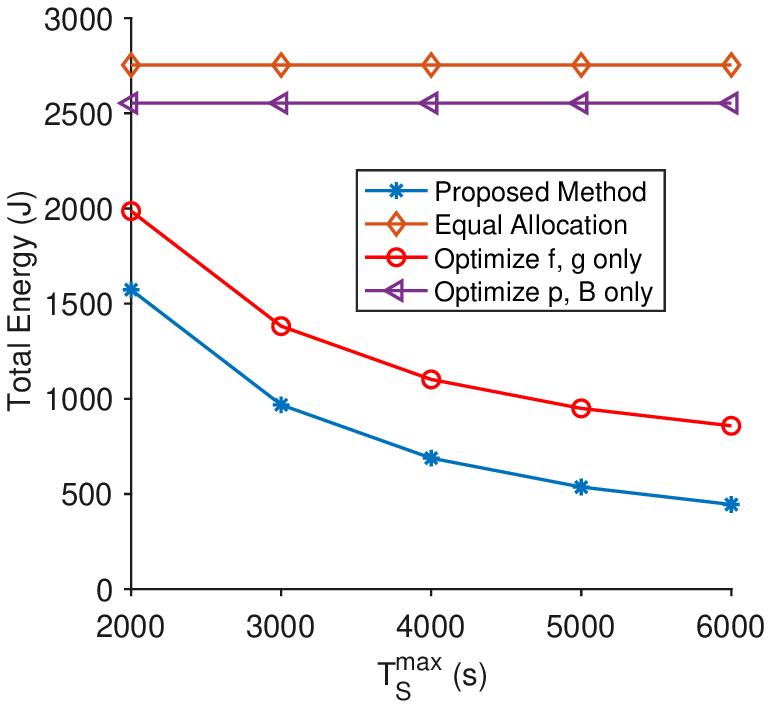}
    \caption{}
    \end{subfigure}
    \caption{Comparison with benchmarks on energy optimization under different resource budgets.}
    \label{fig:consumption}
    \vspace{-10pt}
\end{figure*}

\textbf{Total Bandwidth.} Here we vary the total available bandwidth $B^{\textnormal{total}}$ from 5 MHz to 25 MHz. In Fig.~\ref{fig:consumption}(a), larger available bandwidth means higher transmission rates, thus reducing uplink transmission latency and energy consumption. 
In addition, our proposed method always retains the best optimization performance compared with all benchmarks.

\textbf{Transmission Power.} Here we vary all the mobile devices' maximum transmission powers $p_n^{\textnormal{max}}$ from $10$ dBm to $30$ dBm. In Fig.~\ref{fig:consumption}(b), a higher $p_n^{\textnormal{max}}$ induces larger energy consumption if not optimized. It is worth noting that the optimization of ``Optimize $\bm{p}$, $\bm{B}$ only" and our proposed method remains stable since the optimal power solution has already been found and increasing the search range does not bring changes.

\textbf{Total Server Frequency.} Here we vary the total server CPU frequency $f^{\textnormal{total}}$ from $10$ GHz to $14$ GHz. Since the optimal frequencies need to be minimized to bring lower energy, we can see from Fig.~\ref{fig:consumption}(c) that curves of ``Optimize $\bm{f}$, $\bm{g}$ only" and our proposed method do not change as $f^{\textnormal{total}}$ increases.

\textbf{Mobile Device Frequency.}
Here we vary all mobile devices' maximum frequencies $g_n^{\textnormal{max}}$ from $1$ GHz to $5$ GHz. Intuitively, a trend similar to that in Fig.~\ref{fig:consumption}(c) can be observed in Fig.~\ref{fig:consumption}(d) since devices' frequencies also need to be minimized. Our proposed method still outperforms other benchmarks.

\textbf{Time Budget on Devices.} Here we configure the time budget on devices $T_D^{\textnormal{max}}$ from $1000$ s to $3000$ s. A higher $T_D^{\textnormal{max}}$ allows mobile devices to adopt lower computation frequencies and transmission powers to further decrease energy consumption. Hence, a slower downward trend can be observed in Fig.~\ref{fig:consumption}(e).

\textbf{Time Budget on Server.} Here we vary the time budget on server $T_S^{\textnormal{max}}$ from $2000$ s to $6000$ s. A higher $T_S^{\textnormal{max}}$ allows the server to allocate lower computation frequencies $\bm{f}$ to mobile devices, thus decreasing energy consumption. Thereafter, the energy optimization of our proposed method and ``Optimize $\bm{f}$, $\bm{g}$ only" decreases as $T_S^{\textnormal{max}}$ grows.

\subsection{Impact of Weight Parameter on Energy and Privacy}\label{subsec:weight}
Now we check how the weight parameter $\omega$ affects the tradeoff between energy and overall privacy level.
We also detail the choices of $\bm{\lambda}$ by devices with different levels of privacy protection $c_n$. 
Specifically, we raise time budgets $T^{\textnormal{max}}_D$ and $T^{\textnormal{max}}_S$ to $1500$ s and $5000$ s, and we adjust $\omega$ from $1$ to $10$. The experimental results are reported in Fig.~\ref{fig:omega}(a) and (b).

Fig.~\ref{fig:omega}(a) shows that as $\omega$ increases, both energy consumption and privacy level rise. This trend happens because a higher $\omega$ makes overall privacy level more significant in the joint optimization. As a result, our method tends to use a larger $\bm{\lambda}$ to improve privacy, sacrificing energy optimization performance.

Fig.~\ref{fig:omega}(b) shows how devices with different privacy concerns select $\bm{\lambda}$.
Recall that we categorize devices into three levels of privacy concern $c_n$ from $\{10, 5, 1\}$,
and we select one device from each level to analyze how they adjust $\lambda_n$ as $\omega$ varies.
They all demonstrate a step-wise increase in $\lambda_n$ as $\omega$ increases, and devices with higher $c_n$ values tend to choose larger $\lambda_n$, aligning with their greater contribution to overall privacy.

\begin{figure}[!t]
    \centering
    \begin{subfigure}[b]{0.48\linewidth}
        \centering
\includegraphics[width=\linewidth]{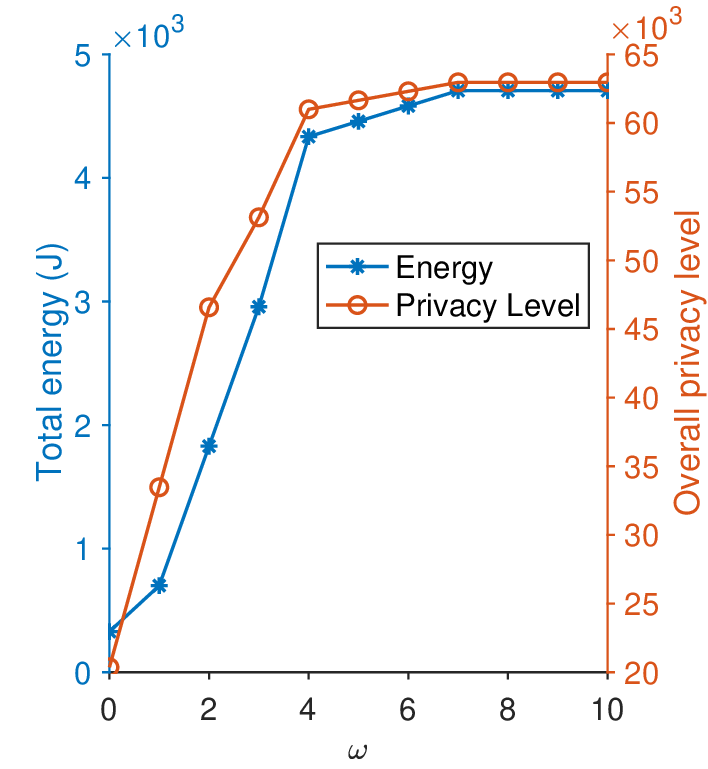}
    \caption{Total Energy $E^{\textnormal{total}}$ and overall privacy level $S^{\textnormal{total}}$ under different weight parameter $\omega$.}
    \end{subfigure}
    \hspace{3pt}
    \begin{subfigure}[b]{0.48\linewidth}
        \centering
\includegraphics[width=\linewidth]{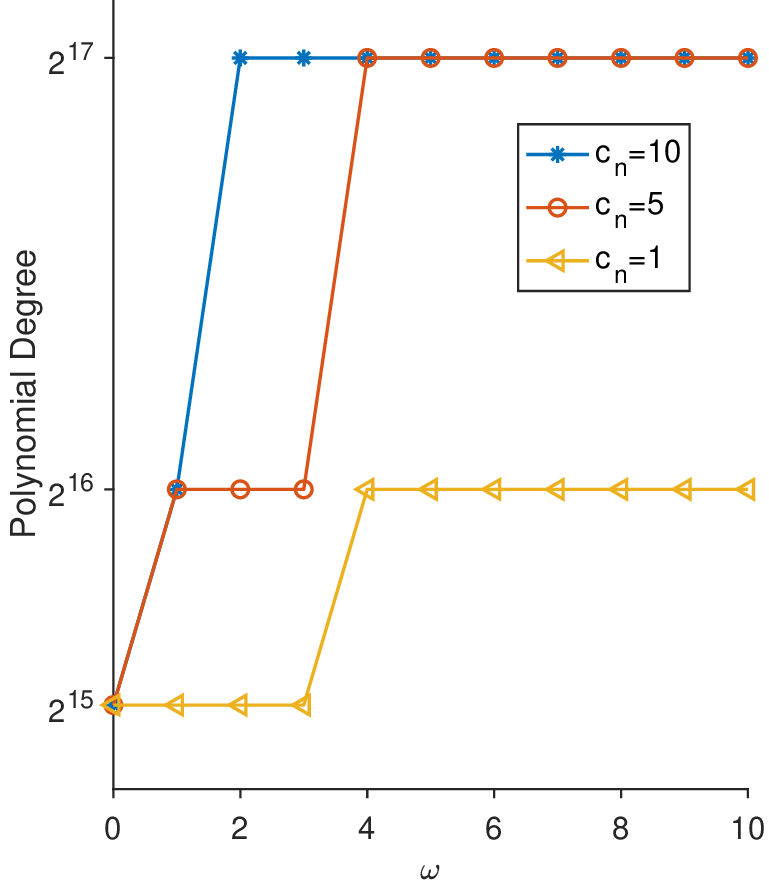}
    \caption{Choices of polynomial degree $\lambda_n$ by devices with different privacy concern $c_n=\{10,5,1\}$.}
    \end{subfigure}
    \caption{Optimization performance and choices of polynomial degrees $\lambda_n$ under different weight parameters $\omega$.} 
    \label{fig:omega}
    \vspace{-10pt}
\end{figure}

\section{Discussion \& Conclusion}\label{sec:7}

\textbf{Privacy of Adapters.} One potential concern for PrivTuner is
that the adapter may leak some information about data.
We use unencrypted adapters to facilitate the following inference services after fine-tuning. 
Given this limitation, we adopt an honest but curious security model, where the server is assumed to refrain from malicious actions (e.g., data reconstruction attacks). Besides, 
adapters are much smaller than the foundation model and contain far less information.
We also consider potential solutions in future work, including fine-tuning in a trustworthy environment~\cite{turtiainen2023trusted,shepherd2024trusted}, or using MPC techniques~\cite{zhao2019secure,dong2023puma} to distribute updates across multiple servers without any single server having full adapter information.

\textbf{Malicious Security.}
Supporting malicious security poses challenges for our proposed PrivTuner. If either the server or the client maliciously deviates from the protocol (e.g., by intentionally sending incorrect data), then the fine-tuning process may be compromised. 
However, even under a malicious adversary model, PrivTuner maintains security by isolating client and server roles. A malicious client gains no extra model information and only risks its own process if deviating, while a malicious server cannot deduce true labels from plaintext loss due to encrypted predictions. Additionally, client-specific FHE secret keys prevent the server from gaining further information through collusion. In future work, we will explore how to adapt advanced strategies from current malicious secure studies~\cite{lehmkuhl2021muse,chandran2022simc} to further enhance the robustness of PrivTuner.

\textbf{Accelerate FHE with Hardware and GPU Backends.} While our implementation leverages HEXL to accelerate FHE computations, the overhead remains challenging for real-time applications. GPU-based FHE acceleration~\cite{zhang2024secure} has shown promise, with emerging libraries~\cite{wang2023he,yang2024phantom} offering support. However, these are still evolving and seldom used for complex neural networks. As our framework is higher-level and network-oriented, it can seamlessly integrate future GPU-accelerated solutions, enhancing both efficiency and security.

\textbf{Conclusion.}
In summary, this paper introduced a novel scheme named PrivTuner for P3EFT tasks of AI foundation models. Specifically, PrivTuner integrates LoRA with FHE schemes, allowing the model owner server and external mobile devices to collaboratively implement secure and efficient fine-tuning. Additionally, we also introduce a resource allocation method to jointly optimize energy consumption and overall privacy level within PrivTuner. A theoretical analysis of time complexity, solution quality and convergence is provided.
The experimental results demonstrate the effectiveness and superiority of our approach, making it a promising solution to privacy and resource problems in fine-tuning foundation models.

\bibliographystyle{IEEEtran}
\bibliography{ref}

\begin{thebibliography}{10}
\providecommand{\url}[1]{#1}
\csname url@samestyle\endcsname
\providecommand{\newblock}{\relax}
\providecommand{\bibinfo}[2]{#2}
\providecommand{\BIBentrySTDinterwordspacing}{\spaceskip=0pt\relax}
\providecommand{\BIBentryALTinterwordstretchfactor}{4}
\providecommand{\BIBentryALTinterwordspacing}{\spaceskip=\fontdimen2\font plus
\BIBentryALTinterwordstretchfactor\fontdimen3\font minus \fontdimen4\font\relax}
\providecommand{\BIBforeignlanguage}[2]{{%
\expandafter\ifx\csname l@#1\endcsname\relax
\typeout{** WARNING: IEEEtran.bst: No hyphenation pattern has been}%
\typeout{** loaded for the language `#1'. Using the pattern for}%
\typeout{** the default language instead.}%
\else
\language=\csname l@#1\endcsname
\fi
#2}}
\providecommand{\BIBdecl}{\relax}
\BIBdecl

\bibitem{hu2021lora}
E.~J. Hu, P.~Wallis, Z.~Allen-Zhu, Y.~Li, S.~Wang, L.~Wang, and W.~Chen, ``{LoRA}: Low-rank adaptation of large language models,'' in \emph{International Conference on Learning Representations}, 2021.

\bibitem{devlin2018bert}
J.~Devlin, ``Bert: Pre-training of deep bidirectional transformers for language understanding,'' \emph{arXiv preprint arXiv:1810.04805}, 2018.

\bibitem{radford2019language}
A.~Radford, J.~Wu, R.~Child, D.~Luan, D.~Amodei, I.~Sutskever \emph{et~al.}, ``Language models are unsupervised multitask learners,'' \emph{OpenAI Blog}, vol.~1, no.~8, p.~9, 2019.

\bibitem{brown2020language}
T.~Brown, B.~Mann, N.~Ryder, M.~Subbiah, J.~D. Kaplan, P.~Dhariwal, A.~Neelakantan, P.~Shyam, G.~Sastry, A.~Askell \emph{et~al.}, ``Language models are few-shot learners,'' \emph{Advances in Neural Information Processing Systems}, vol.~33, pp. 1877--1901, 2020.

\bibitem{ouyang2022training}
L.~Ouyang, J.~Wu, X.~Jiang, D.~Almeida, C.~Wainwright, P.~Mishkin, C.~Zhang, S.~Agarwal, K.~Slama, A.~Ray \emph{et~al.}, ``Training language models to follow instructions with human feedback,'' \emph{Advances in Neural Information Processing Systems}, vol.~35, pp. 27\,730--27\,744, 2022.

\bibitem{dai2022bertology}
Y.~Dai, M.~de~Kamps, and S.~Sharoff, ``{BERTology} for machine translation: What {BERT} knows about linguistic difficulties for translation,'' in \emph{Language Resources and Evaluation Conference}, 2022, pp. 6674--6690.

\bibitem{hoang2019aspect}
M.~Hoang, O.~A. Bihorac, and J.~Rouces, ``Aspect-based sentiment analysis using {BERT},'' in \emph{Nordic Conference on Computational Linguistics}, 2019, pp. 187--196.

\bibitem{wang2019text}
Q.~Wang, P.~Liu, Z.~Zhu, H.~Yin, Q.~Zhang, and L.~Zhang, ``A text abstraction summary model based on {BERT} word embedding and reinforcement learning,'' \emph{Applied Sciences}, vol.~9, no.~21, p. 4701, 2019.

\bibitem{wang2019multi}
Z.~Wang, P.~Ng, X.~Ma, R.~Nallapati, and B.~Xiang, ``Multi-passage {BERT}: A globally normalized {BERT} model for open-domain question answering,'' \emph{arXiv preprint arXiv:1908.08167}, 2019.

\bibitem{zhou2023resource}
X.~Zhou, C.~Liu, and J.~Zhao, ``Resource allocation of federated learning for the metaverse with mobile augmented reality,'' \emph{IEEE Transactions on Wireless Communications}, 2023.

\bibitem{li2021prefix}
X.~L. Li and P.~Liang, ``Prefix-tuning: Optimizing continuous prompts for generation,'' \emph{arXiv preprint arXiv:2101.00190}, 2021.

\bibitem{qin2021learning}
G.~Qin and J.~Eisner, ``Learning how to ask: Querying lms with mixtures of soft prompts,'' \emph{arXiv preprint arXiv:2104.06599}, 2021.

\bibitem{zaken2021bitfit}
E.~B. Zaken, S.~Ravfogel, and Y.~Goldberg, ``{BitFit}: Simple parameter-efficient fine-tuning for transformer-based masked language-models,'' \emph{arXiv preprint arXiv:2106.10199}, 2021.

\bibitem{srivastawa2023exploring}
A.~K. Srivastawa, ``Exploring contract management in the digital age: The impact of artificial intelligence,'' \emph{Jus Corpus Law Journal}, 2023.

\bibitem{chen2023federated}
C.~Chen, X.~Feng, J.~Zhou, J.~Yin, and X.~Zheng, ``Federated large language model: A position paper,'' \emph{arXiv preprint arXiv:2307.08925}, 2023.

\bibitem{yu2023federated}
S.~Yu, J.~P. Mu{\~n}oz, and A.~Jannesari, ``Federated foundation models: Privacy-preserving and collaborative learning for large models,'' \emph{arXiv preprint arXiv:2305.11414}, 2023.

\bibitem{yu2021differentially}
D.~Yu, S.~Naik, A.~Backurs, S.~Gopi, H.~A. Inan, G.~Kamath, J.~Kulkarni, Y.~T. Lee, A.~Manoel, L.~Wutschitz \emph{et~al.}, ``Differentially private fine-tuning of language models,'' \emph{arXiv preprint arXiv:2110.06500}, 2021.

\bibitem{ramachandran2021s++}
P.~Ramachandran, S.~Agarwal, A.~Mondal, A.~Shah, and D.~Gupta, ``S++: A fast and deployable secure-computation framework for privacy-preserving neural network training,'' \emph{arXiv preprint arXiv:2101.12078}, 2021.

\bibitem{al2020privft}
A.~Al~Badawi, L.~Hoang, C.~F. Mun, K.~Laine, and K.~M.~M. Aung, ``Privft: Private and fast text classification with homomorphic encryption,'' \emph{IEEE Access}, 2020.

\bibitem{lee2022privacy}
J.-W. Lee, H.~Kang, Y.~Lee, W.~Choi, J.~Eom, M.~Deryabin, E.~Lee, J.~Lee, D.~Yoo, Y.-S. Kim \emph{et~al.}, ``Privacy-preserving machine learning with fully homomorphic encryption for deep neural network,'' \emph{IEEE Access}, 2022.

\bibitem{gentry2009fully}
C.~Gentry, \emph{A fully homomorphic encryption scheme}.\hskip 1em plus 0.5em minus 0.4em\relax Stanford university, 2009.

\bibitem{brakerski2012fully}
Z.~Brakerski, ``Fully homomorphic encryption without modulus switching from classical {GapSVP},'' in \emph{Annual Cryptology Conference}.\hskip 1em plus 0.5em minus 0.4em\relax Springer, 2012, pp. 868--886.

\bibitem{brakerski2014leveled}
Z.~Brakerski, C.~Gentry, and V.~Vaikuntanathan, ``(leveled) fully homomorphic encryption without bootstrapping,'' \emph{ACM Transactions on Computation Theory}, vol.~6, no.~3, pp. 1--36, 2014.

\bibitem{cheon2017homomorphic}
J.~H. Cheon, A.~Kim, M.~Kim, and Y.~Song, ``Homomorphic encryption for arithmetic of approximate numbers,'' in \emph{International Conference on the Theory and Application of Cryptology and Information Security}, 2017.

\bibitem{mihara2020neural}
K.~Mihara, R.~Yamaguchi, M.~Mitsuishi, and Y.~Maruyama, ``Neural network training with homomorphic encryption,'' \emph{arXiv preprint arXiv:2012.13552}, 2020.

\bibitem{bakshi2020cryptornn}
M.~Bakshi and M.~Last, ``Cryptornn-privacy-preserving recurrent neural networks using homomorphic encryption,'' in \emph{Cyber Security Cryptography and Machine Learning: Fourth International Symposium, CSCML 2020, Be'er Sheva, Israel, July 2--3, 2020, Proceedings 4}.\hskip 1em plus 0.5em minus 0.4em\relax Springer, 2020, pp. 245--253.

\bibitem{podschwadt2021non}
R.~Podschwadt and D.~Takabi, ``Non-interactive privacy preserving recurrent neural network prediction with homomorphic encryption,'' in \emph{IEEE 14th International Conference on Cloud Computing}, 2021.

\bibitem{zhang2024secure}
J.~Zhang, J.~Liu, X.~Yang, Y.~Wang, K.~Chen, X.~Hou, K.~Ren, and X.~Yang, ``Secure transformer inference made non-interactive,'' \emph{Cryptology ePrint Archive}, 2024.

\bibitem{rovida2024transformer}
L.~Rovida and A.~Leporati, ``Transformer-based language models and homomorphic encryption: An intersection with bert-tiny,'' in \emph{Proceedings of the 10th ACM International Workshop on Security and Privacy Analytics}, 2024, pp. 3--13.

\bibitem{zmushko2023privacy}
\BIBentryALTinterwordspacing
P.~Zmushko, M.~Mansurov, R.~Svirschevski, D.~Kuznedelev, M.~Ryabinin, and A.~Beznosikov, ``Privacy preserving {API} fine-tuning for {LLMs},'' September 2023. [Online]. Available: \url{https://openreview.net/forum?id=jMJ9IRWmH9}
\BIBentrySTDinterwordspacing

\bibitem{abuadbba2020can}
S.~Abuadbba, K.~Kim, M.~Kim, C.~Thapa, S.~A. Camtepe, Y.~Gao, H.~Kim, and S.~Nepal, ``Can we use split learning on 1d cnn models for privacy preserving training?'' in \emph{Proceedings of the 15th ACM Asia Conference on Computer and Communications Security}, 2020, pp. 305--318.

\bibitem{pasquini2021unleashing}
D.~Pasquini, G.~Ateniese, and M.~Bernaschi, ``Unleashing the tiger: Inference attacks on split learning,'' in \emph{ACM SIGSAC Conference on Computer and Communications Security}, 2021, pp. 2113--2129.

\bibitem{bu2022differentially}
Z.~Bu, Y.-X. Wang, S.~Zha, and G.~Karypis, ``Differentially private bias-term only fine-tuning of foundation models,'' in \emph{Neural Information Processing Systems Workshop on Trustworthy and Socially Responsible Machine Learning}, 2022.

\bibitem{boulemtafes2020review}
A.~Boulemtafes, A.~Derhab, and Y.~Challal, ``A review of privacy-preserving techniques for deep learning,'' \emph{Neurocomputing}, 2020.

\bibitem{sai2024parameter}
C.~C. Sai~Balne, S.~Bhaduri, T.~Roy, V.~Jain, and A.~Chadha, ``Parameter efficient fine tuning: A comprehensive analysis across applications,'' \emph{arXiv e-prints}, pp. arXiv--2404, 2024.

\bibitem{gilad2016cryptonets}
R.~Gilad-Bachrach, N.~Dowlin, K.~Laine, K.~Lauter, M.~Naehrig, and J.~Wernsing, ``Cryptonets: Applying neural networks to encrypted data with high throughput and accuracy,'' in \emph{International Conference on Machine Learning}, 2016, pp. 201--210.

\bibitem{jin2019carenets}
C.~Jin, A.~Al~Badawi, J.~Unnikrishnan, C.~F. Mun, J.~M. Brown, J.~P. Campbell, M.~Chiang, J.~Kalpathy-Cramer, V.~R. Chandrasekhar, P.~Krishnaswamy \emph{et~al.}, ``{CareNets}: Efficient homomorphic cnn for high resolution images,'' in \emph{Advances in Neural Information Processing Systems Workshop on Privacy in Machine Learning}, 2019.

\bibitem{lou2019she}
Q.~Lou and L.~Jiang, ``{SHE}: A fast and accurate deep neural network for encrypted data,'' \emph{Advances in Neural Information Processing Systems}, vol.~32, 2019.

\bibitem{cheon2019full}
J.~H. Cheon, K.~Han, A.~Kim, M.~Kim, and Y.~Song, ``A full {RNS} variant of approximate homomorphic encryption,'' in \emph{Selected Areas in Cryptography}.\hskip 1em plus 0.5em minus 0.4em\relax Springer, 2019, pp. 347--368.

\bibitem{podschwadt2022survey}
R.~Podschwadt, D.~Takabi, P.~Hu, M.~H. Rafiei, and Z.~Cai, ``A survey of deep learning architectures for privacy-preserving machine learning with fully homomorphic encryption,'' \emph{IEEE Access}, 2022.

\bibitem{jang2022privacy}
J.~Jang, Y.~Lee, A.~Kim, B.~Na, D.~Yhee, B.~Lee, J.~H. Cheon, and S.~Yoon, ``Privacy-preserving deep sequential model with matrix homomorphic encryption,'' in \emph{ACM on Asia Conference on Computer and Communications Security}, 2022, pp. 377--391.

\bibitem{hao2022iron}
M.~Hao, H.~Li, H.~Chen, P.~Xing, G.~Xu, and T.~Zhang, ``Iron: Private inference on transformers,'' \emph{Advances in Neural Information Processing Systems}, vol.~35, pp. 15\,718--15\,731, 2022.

\bibitem{pang2023bolt}
Q.~Pang, J.~Zhu, H.~M{\"o}llering, W.~Zheng, and T.~Schneider, ``Bolt: Privacy-preserving, accurate and efficient inference for transformers,'' \emph{Cryptology ePrint Archive}, 2023.

\bibitem{panzade2024can}
P.~Panzade, D.~Takabi, and Z.~Cai, ``I can't see it but i can fine-tune it: On encrypted fine-tuning of transformers using fully homomorphic encryption,'' \emph{arXiv preprint arXiv:2402.09059}, 2024.

\bibitem{zhou2022joint}
X.~Zhou, J.~Zhao, H.~Han, and C.~Guet, ``Joint optimization of energy consumption and completion time in federated learning,'' in \emph{2022 IEEE 42nd International Conference on Distributed Computing Systems (ICDCS)}.\hskip 1em plus 0.5em minus 0.4em\relax IEEE, 2022, pp. 1005--1017.

\bibitem{zafar2024federated}
S.~Zafar, S.~Jangsher, and A.~Zafar, ``Federated learning for resource allocation in vehicular edge computing-enabled moving small cell networks,'' \emph{Vehicular Communications}, vol.~45, p. 100695, 2024.

\bibitem{wang2024adaptive}
L.~Wang, W.~Wu, F.~Zhou, Z.~Yang, Z.~Qin, and Q.~Wu, ``Adaptive resource allocation for semantic communication networks,'' \emph{IEEE Transactions on Communications}, 2024.

\bibitem{li2023resource}
Y.~Li, X.~Zhou, and J.~Zhao, ``Resource allocation for semantic communication under physical-layer security,'' in \emph{GLOBECOM 2023-2023 IEEE Global Communications Conference}.\hskip 1em plus 0.5em minus 0.4em\relax IEEE, 2023, pp. 2063--2068.

\bibitem{yan2023joint}
X.~Yan, X.~Fang, C.~Deng, and X.~Wang, ``Joint optimization of resource allocation and trajectory control for mobile group users in fixed-wing uav-enabled wireless network,'' \emph{IEEE Transactions on Wireless Communications}, 2023.

\bibitem{mao2023joint}
H.~Mao, Y.~Liu, Z.~Xiao, Z.~Han, and X.-G. Xia, ``Joint resource allocation and 3d deployment for multi-uav covert communications,'' \emph{IEEE Internet of Things Journal}, 2023.

\bibitem{fang2024automated}
Z.~Fang, Z.~Lin, Z.~Chen, X.~Chen, Y.~Gao, and Y.~Fang, ``Automated federated pipeline for parameter-efficient fine-tuning of large language models,'' \emph{arXiv preprint arXiv:2404.06448}, 2024.

\bibitem{chua2023fedpeat}
T.~J. Chua, W.~Yu, and J.~Zhao, ``{FedPEAT}: Convergence of federated learning, parameter-efficient fine tuning, and emulator assisted tuning for artificial intelligence foundation models with mobile edge computing,'' \emph{arXiv preprint arXiv:2310.17491}, 2023.

\bibitem{yu2023orchestration}
W.~Yu, T.~J. Chua, and J.~Zhao, ``Orchestration of emulator assisted mobile edge tuning for ai foundation models: A multi-agent deep reinforcement learning approach,'' \emph{arXiv preprint arXiv:2310.17492}, 2023.

\bibitem{fang2021privacy}
H.~Fang and Q.~Qian, ``Privacy preserving machine learning with homomorphic encryption and federated learning,'' \emph{Future Internet}, 2021.

\bibitem{paillier1999public}
P.~Paillier, ``Public-key cryptosystems based on composite degree residuosity classes,'' in \emph{International Conference on the Theory and Applications of Cryptographic Techniques}.\hskip 1em plus 0.5em minus 0.4em\relax Springer, 1999.

\bibitem{suzuki2020damcrem}
T.~Suzuki, Y.~Ishimaki, and H.~Yamana, ``Damcrem: Dynamic allocation method of computation resource to macro-tasks for fully homomorphic encryption applications,'' in \emph{IEEE International Conference on Smart Computing}, 2020, pp. 458--463.

\bibitem{goldreich2004foundations}
O.~Goldreich, \emph{Foundations of Cryptography, Volume 2}.\hskip 1em plus 0.5em minus 0.4em\relax Cambridge University Press, 2004.

\bibitem{turc2019well}
I.~Turc, M.-W. Chang, K.~Lee, and K.~Toutanova, ``Well-read students learn better: On the importance of pre-training compact models,'' \emph{arXiv preprint arXiv:1908.08962}, 2019.

\bibitem{dinh2020federated}
C.~T. Dinh, N.~H. Tran, M.~N. Nguyen, C.~S. Hong, W.~Bao, A.~Y. Zomaya, and V.~Gramoli, ``Federated learning over wireless networks: Convergence analysis and resource allocation,'' \emph{IEEE/ACM Transactions on Networking}, vol.~29, no.~1, pp. 398--409, 2020.

\bibitem{marcano2019fully}
N.~J.~H. Marcano, M.~Moller, S.~Hansen, and R.~H. Jacobsen, ``On fully homomorphic encryption for privacy-preserving deep learning,'' in \emph{IEEE Global Communications Conference Workshops}, 2019.

\bibitem{albrecht2021homomorphic}
M.~Albrecht, M.~Chase, H.~Chen, J.~Ding, S.~Goldwasser, S.~Gorbunov, S.~Halevi, J.~Hoffstein, K.~Laine, K.~Lauter \emph{et~al.}, ``Homomorphic encryption standard,'' in \emph{Protecting Privacy through Homomorphic Encryption}.\hskip 1em plus 0.5em minus 0.4em\relax Springer, 2021, pp. 31--62.

\bibitem{schnorr2003lattice}
C.~P. Schnorr, ``Lattice reduction by random sampling and birthday methods,'' in \emph{Annual Symposium on Theoretical Aspects of Computer Science}.\hskip 1em plus 0.5em minus 0.4em\relax Springer, 2003, pp. 145--156.

\bibitem{liu2013solving}
M.~Liu and P.~Q. Nguyen, ``Solving {BDD} by enumeration: An update,'' in \emph{Cryptographers’ Track at the RSA Conference}.\hskip 1em plus 0.5em minus 0.4em\relax Springer, 2013.

\bibitem{albrecht2017dual}
M.~R. Albrecht, ``On dual lattice attacks against small-secret lwe and parameter choices in helib and seal,'' in \emph{Annual International Conference on the Theory and Applications of Cryptographic Techniques}.\hskip 1em plus 0.5em minus 0.4em\relax Springer, 2017, pp. 103--129.

\bibitem{albrecht2015concrete}
M.~R. Albrecht, R.~Player, and S.~Scott, ``On the concrete hardness of learning with errors,'' \emph{Journal of Mathematical Cryptology}, 2015.

\bibitem{murugesan2021analysis}
A.~Murugesan, B.~Saminathan, F.~Al-Turjman, and R.~L. Kumar, ``Analysis on homomorphic technique for data security in fog computing,'' \emph{Transactions on Emerging Telecommunications Technologies}, 2021.

\bibitem{kumari2019fair}
S.~Kumari and A.~Singh, ``Fair end-to-end window-based congestion control in time-varying data communication networks,'' \emph{International Journal of Communication Systems}, vol.~32, 2019.

\bibitem{zoppei2022branch}
R.~T. Zoppei, M.~A. Delgado, L.~H. Macedo, M.~J. Rider, and R.~Romero, ``A branch and bound algorithm for transmission network expansion planning using nonconvex mixed-integer nonlinear programming models,'' \emph{IEEE Access}, vol.~10, pp. 39\,875--39\,888, 2022.

\bibitem{cambini2008generalized}
A.~Cambini and L.~Martein, \emph{Generalized convexity and optimization: Theory and applications}.\hskip 1em plus 0.5em minus 0.4em\relax Springer Science \& Business Media, 2008.

\bibitem{jong2012efficient}
Y.~Jong, ``An efficient global optimization algorithm for nonlinear sum-of-ratios problem,'' \emph{Optimization Online}, pp. 1--21, 2012.

\bibitem{zhao2023human}
J.~Zhao, L.~Qian, and W.~Yu, ``Human-centric resource allocation in the metaverse over wireless communications,'' \emph{IEEE Journal on Selected Areas in Communications}, 2023.

\bibitem{OpenFHE}
\BIBentryALTinterwordspacing
A.~A. Badawi, A.~Alexandru, J.~Bates, F.~Bergamaschi, D.~B. Cousins, S.~Erabelli, N.~Genise, S.~Halevi, H.~Hunt, A.~Kim, Y.~Lee, Z.~Liu, D.~Micciancio, C.~Pascoe, Y.~Polyakov, I.~Quah, S.~R.V., K.~Rohloff, J.~Saylor, D.~Suponitsky, M.~Triplett, V.~Vaikuntanathan, and V.~Zucca, ``{OpenFHE}: Open-source fully homomorphic encryption library,'' Cryptology ePrint Archive, Paper 2022/915, 2022, \url{https://eprint.iacr.org/2022/915}. [Online]. Available: \url{https://eprint.iacr.org/2022/915}
\BIBentrySTDinterwordspacing

\bibitem{boemer2021intel}
F.~Boemer, S.~Kim, G.~Seifu, F.~DM~de Souza, and V.~Gopal, ``Intel hexl: Accelerating homomorphic encryption with intel avx512-ifma52,'' in \emph{Proceedings of the 9th on Workshop on Encrypted Computing \& Applied Homomorphic Cryptography}, 2021, pp. 57--62.

\bibitem{turtiainen2023trusted}
M.~Turtiainen, ``Trusted execution environments in protecting machine learning models,'' \emph{Computer Science}, 2023.

\bibitem{shepherd2024trusted}
C.~Shepherd and K.~Markantonakis, ``Trusted execution environments,'' 2024.

\bibitem{zhao2019secure}
C.~Zhao, S.~Zhao, M.~Zhao, Z.~Chen, C.-Z. Gao, H.~Li, and Y.-a. Tan, ``Secure multi-party computation: theory, practice and applications,'' \emph{Information Sciences}, vol. 476, pp. 357--372, 2019.

\bibitem{dong2023puma}
Y.~Dong, W.-j. Lu, Y.~Zheng, H.~Wu, D.~Zhao, J.~Tan, Z.~Huang, C.~Hong, T.~Wei, and W.~Cheng, ``Puma: Secure inference of llama-7b in five minutes,'' \emph{arXiv preprint arXiv:2307.12533}, 2023.

\bibitem{lehmkuhl2021muse}
R.~Lehmkuhl, P.~Mishra, A.~Srinivasan, and R.~A. Popa, ``Muse: Secure inference resilient to malicious clients,'' in \emph{30th USENIX Security Symposium (USENIX Security 21)}, 2021, pp. 2201--2218.

\bibitem{chandran2022simc}
N.~Chandran, D.~Gupta, S.~L.~B. Obbattu, and A.~Shah, ``$\{$SIMC$\}$:$\{$ML$\}$ inference secure against malicious clients at $\{$Semi-Honest$\}$ cost,'' in \emph{31st USENIX Security Symposium (USENIX Security 22)}, 2022, pp. 1361--1378.

\bibitem{wang2023he}
Z.~Wang, P.~Li, R.~Hou, Z.~Li, J.~Cao, X.~Wang, and D.~Meng, ``He-booster: an efficient polynomial arithmetic acceleration on gpus for fully homomorphic encryption,'' \emph{IEEE Transactions on Parallel and Distributed Systems}, vol.~34, no.~4, pp. 1067--1081, 2023.

\bibitem{yang2024phantom}
H.~Yang, S.~Shen, W.~Dai, L.~Zhou, Z.~Liu, and Y.~Zhao, ``Phantom: a cuda-accelerated word-wise homomorphic encryption library,'' \emph{IEEE Transactions on Dependable and Secure Computing}, 2024.

\end{thebibliography}

\appendices

\appendices

\section{Approximating Non-linear Functions}\label{sec:Approximations}

Here, we briefly present the 
method of Rovida \textit{et al.}~\cite{rovida2024transformer} to approximate the non-linear functions in BERT-Tiny.\\
\textbf{Softmax.} Given an input vector $\boldsymbol{x} \in \mathbb{R}^d$, the Softmax function can be computed as:
    \begin{align}
        \mathrm{Softmax}(\boldsymbol{x})=\left[\frac{\exp({x_i})}{\sum_{j=1}^{d}\exp({x_j})}\biggm|_{i\in[d]}\right].\label{equa:softmax}
    \end{align}
    The main challenge is to efficiently calculate the underlying exponential function $\mathrm{exp}(x)$ and division computation $1/x$. The exponentiation is approximated using the Maclaurin series:
    \begin{align}
        \mathrm{exp}(x)\approx\sum_{i=0}^n\frac{x^i}{i!}.
    \end{align}
    By doing so, we can limit the average error within $1\times10^{-7}$ with the degree $n=6$, as shown in Table~\ref{Tab:non-linear-layers}.
    
For the division operation $1/x$, we utilize a Chebyshev polynomial supported by OpenFHE to approximate it as follows:
\begin{align}
    1/x=\frac{c_0}{2}+\sum_{i=1}^{n}c_iT_i(x),
\end{align}
where $n$ is the polynomial degree, $T_i(x)=\cos{(i \arccos{x})}$ provide an orthogonal basis of polynomials on the interval $[-1,1]$ with the weight function $1 / \sqrt{1-x^2}$, and $c_i$ is the coefficient for fitting.
Following~\cite{rovida2024transformer}, we set $n=119$ and the input interval for fitting as $[2,5000]$. By doing so, we can observe that the average error can be limited within $1\times10^{-4}$ with a multiplicative depth of $8$ in Table~\ref{Tab:non-linear-layers}.\\
\textbf{GeLU.} The GeLU (Gaussian Error Linear Unit) activation function is defined as follows:
\begin{align}
    \mathrm{GeLU}(x)=\frac{x}{2}\big(1+\mathrm{erf}(\frac{x}{\sqrt{2}})\big),\label{equa:GeLU}
\end{align}
where $\mathrm{erf}(\cdot)$ denotes the Gaussian error function which is given by  $\mathrm{erf}(x)=\frac{2}{\sqrt{\pi}}\int_0^xe^{-t^2}dt$. 
In particular, we approximate the whole GELU function utilizing a Chebyshev polynomial:
\begin{align}
    \mathrm{GeLU}(x)=\frac{c_0}{2}+\sum_{i=1}^{n}c_iT_i(x),
\end{align}
where we set $n=59$, and the average error on interval of $[-18,8]$ could be limited within $4\times10^{-8}$, as shown in Table~\ref{Tab:non-linear-layers}.\\
\textbf{LayerNorm.} 
For a given vector $\boldsymbol{x}\in \mathbb{R}^d$, the LayerNorm function is defined as follows:
\begin{align}
    \mathrm{LayerNorm}(\boldsymbol{x}) = \Big[\frac{(x_i-\mu)}{\sigma}\cdot \gamma +\beta \mid_{i=1,\ldots,d}\Big],
\end{align}
where $\mu=\sum_{i=1}^dx_i/d$ and $\sigma=\sqrt{\sum_{i=1}^d(x_i-\mu)^2}$ are mean and standard deviation, and $\gamma$ and $\beta$ are affine transform parameters. A precomputed LayerNorm is implemented, where the values of $\mu$ and $\sigma$ are experimentally observed and precomputed. Hence, the LayerNorm function could be simplified as:
\begin{align}
    \mathrm{LayerNorm}(\boldsymbol{x}) \approx \Big[(x_i-\hat{\mu})\cdot\hat{\sigma}\cdot\gamma+\beta \mid_{i=1,\ldots,d}\Big],
\end{align}
where $\hat{\mu}$ and $\hat{\sigma}$ are the precomputed mean vectors for $\mu$ and $1/\sigma$. By doing so, we could simplify a lot of circuits at the expense of a 0.022 accuracy drop on SST-2, as reported by~\cite{rovida2024transformer}. For better accuracy performance, we can only precompute the value of $\sigma$, as the computation of $\mu=\sum_{i=1}^dx_i/d$ is actually linear and easily supported by FHE.\\
\textbf{Pooler.} After stacked encoders, we implement a pooling layer as the task layer. A fully connected layer is evaluated on the vector of [CLS] token, followed by an activation function of the hyperbolic tangent: $\mathrm{tanh}(x)=\mathrm{sinh}(x)/\mathrm{cosh}(x)$. Similarly, we view $\mathrm{tanh}(x)$ as a whole and use a Chebyshev polynomial to do the approximation. The observed interval of approximation required is $[-20,20]$, and a $119$-degree polynomial is enough to limit the average error within $3\times10^{-5}$.

\begin{table}[h]
  \caption{Non-linear Functions Approximations.}
  \label{Tab:non-linear-layers}
  \centering
  \begin{tabular}{l|p{2.6cm}p{0.95cm}p{0.6cm}l}
    \hline
      Function &Method & Interval & Degree  & Avg Error \\
    \hline
    exp(x) &Maclaurin series &[-1,1] &6  & $3.94\times10^{-8}$ \\\
     1/x &Chebyshev polynomial &[2,5000] &200  &$9.18\times10^{-5}$ \\
    GeLU(x) &Chebyshev polynomial  &[-18,8] &59  & $3.99\times10^{-8}$ \\
    tanh(x) &Chebyshev polynomial  &[-20,20] &119 & $2.51\times10^{-5}$ \\\hline
\end{tabular}
\end{table}

\section{Count of arithmetic operations for simulation}\label{Appendix:count}

We provide a rough estimation of the count of FHE operations based on the employed FHE-BERT-Tiny model~\cite{rovida2024transformer} and its utilized FHE techniques.

Specifically, we first report the count of operations to the basic supporting algorithms~\cite{rovida2024transformer} (e.g., RotSum) in Table~\ref{Tab:count_operations}. This could be done by checking their released codes at \url{https://github.com/narger-ef/FHE-BERT-Tiny/tree/main}. In our experiments, we found the major bulk of linear computation overhead is caused by calling these algorithms for large matrix-matrix multiplications. Thus, we are able to roughly estimate the number of different operations by counting the number of calls to these algorithms. For instance, we could set the number of tokens as $10$, and the numbers of multiplications, additions and rotations in a BERT-Tiny encoder can be calculated as shown in Table~\ref{tab:count_operations_encoder}. Hence, the number of multiplications, additions and rotations could be set into $260$, $1660$ and $1460$ for a BERT-Tiny model consisting of $2$ encoders. Since the number of operations is considered constant in the optimization problem, it does not affect the solution process.

\begin{table}[h]
  \caption{The count of operations to basic algorithms, and $L$ denotes the number of tokens for an input.}
  \label{Tab:count_operations}
  \centering
  \begin{tabular}{l|ccc}
    \hline
       Algorithm & \#Multiplications &\#Additions & \#Rotations\\\hline
    RotSum &$0$ &$7$ & $7$\\
    MatMulRE & $L$ &$8L$ & $7L$ \\
    MatMulCR  & $L$ &$8L$ & $7L$ \\
    MatMulRELarge  & $4L$ &$32L$ & $31L$ \\
    MatMulCRLarge  & $4L$ &$11L$ & $7L$ \\\hline
\end{tabular}
\end{table}

\begin{table}[h]
  \caption{The count of operations in an encoder for a sentence with $10$ tokens.}
  \label{tab:count_operations_encoder}
  \centering
  \begin{tabular}{l|ccc}
    \hline
       Layer & \#Multiplications &\#Additions & \#Rotations\\\hline
    Self-Attention &$50$ &$400$ & $350$ \\
    FFNN &  $80$ &$430$ & $380$ \\\hline
    Total &$130$  &$830$ &$730$ \\\hline
\end{tabular}
\end{table}

\section{More details of Algorithm \ref{Algorithm:BB}}\label{Appendix:BB}


\textbf{Problem Relaxation.} The discrete variable $\bm{\lambda}$ is relaxed into a continuous and reformulate a relaxed problem 
\begin{align}
\mathbb{P}_3:~\min_{\bm{\lambda}}~& (\ref{Subproblem1:v1})
 \nonumber\\
\text{s.t.}~ & 
(\ref{Constra:lambda_1}),(\ref{Constra:lambda_2}).\nonumber\\
& \lambda_{\textnormal{o}1} \leq \lambda_n \leq \lambda_{\textnormal{o}m},~\forall n \in \mathcal{N}\label{constra:lambda_con},
\end{align}
where $(\ref{constra:lambda_con})$ is a convex constraint now, and thus the overall convexity of problem $\mathbb{P}_3$ can also be guaranteed.\\
\textbf{Solution of the Relaxed Problem.} 
With convexity, $\mathbb{P}_3$ can be solved by obtaining the Karush-Kuhn-Tucker (KKT) point. The partial Lagrange function of $\mathbb{P}_3$ is given by:
\begin{align}\label{Lagrange_function1}
\mathcal{L}_1(\bm{\lambda},\alpha)\! =\!&
\sum_{n=1}^{N}\! \kappa y_1(\lambda_n)s_n \overline{f}_n^2(\lambda_n)
\!+\! \sum_{n=1}^{N} \kappa y_2(\lambda_n)D_n\overline{g}_n^2(\lambda_n) \nonumber\\
&- \omega \sum_{n=1}^{N}\sigma_ny_6(\lambda_n)
+\alpha(\sum_{n=1}^N \overline{f}_n(\lambda_n) - f^{\textnormal{total}}).
\end{align}
Here we first give the KKT conditions of (\ref{Lagrange_function1}):
\begin{align}
    \frac{\partial \mathcal{L}_1}{\partial \lambda_n} =&
    \frac{6\kappa M_n^3C_0^3(\lambda_n+C_1)^5}{(T^{\textnormal{max}}_{\textnormal{en}})^2}
    \!-\! \omega c_n C_7 \lambda_n \!+\! \alpha \frac{2C_0(\lambda_n\!+\!C_1)M_{n}}{T^{\textnormal{max}}_{\textnormal{en}}} \nonumber\\
    &
    + \frac{3\kappa D_n^3(C_2a_{n}\!+\!C_4m_{n}\!+C_6 o_n\!)^3\lambda_n^2}{(T^{\textnormal{max}}_{\textnormal{ft}})^2}
    \!=\! 0,~\forall n \in \mathcal{N}. \label{Lagrange:partial_lambda}
\end{align}
\begin{align}
    \alpha \cdot (\sum_{n=1}^N \overline{f}_n(\lambda_n) - f^{\textnormal{total}}) =0.\label{Lagrange:dual}
\end{align}
\begin{align}
    \sum_{n=1}^N \overline{f}_n(\lambda_n) \leq f^{\textnormal{total}}.
\end{align}
\begin{align}
    \alpha \geq 0.
\end{align}
From (\ref{Lagrange:partial_lambda}), we derive its solution $\lambda_n$ represented by $\alpha$ as ${\lambda}_n(\alpha)$. Considering (\ref{Constra:lambda_2}) and (\ref{constra:lambda_con}), we further refine the solution:
\begin{align}
   \overline{\lambda}_n(\alpha) = \max\{\min\{\lambda_{om},\lambda_n^{\textnormal{max}},\lambda_n(\alpha)\},\lambda_{o1}\},\forall n \in \mathcal{N}. \label{hatlambda}
\end{align}
Then, we obtain $\overline{\lambda}_n(\alpha=0)$ by assuming $\alpha=0$ in (\ref{hatlambda}) and discuss the following cases to further determine $\alpha$:
\begin{enumerate}
    \item $\sum_{n=1}^N\overline{f}_n(\overline{\lambda}_n(\alpha=0))\leq f^{\textnormal{total}}$. In this case, we can simply set $\alpha=0$, and the solution is $\overline{\lambda}_n(\alpha=0)$.
    \item   $\sum_{n=1}^N\overline{f}_n(\overline{\lambda}_n(\alpha=0)) > f^{\textnormal{total}}$. Setting $\alpha = 0$ will volate  (\ref{Constra:lambda_1}). Substitute $\alpha > 0$ in (\ref{Lagrange:dual}): 
    \begin{align}
    \hspace{-5pt}\sum_{n=1}^N\overline{f}_n(\max\{\min\{\lambda_{\textnormal{o}m},\!\lambda_n^{\textnormal{max}},\!\overline{\lambda}_n(\alpha)\},\!\lambda_{o1}\})\!=\! f^{\textnormal{total}}.\label{bisection:lambda}
    \end{align}
    A bisection method could find a solution $\alpha^*$ to (\ref{bisection:lambda}). Besides, since  $\sum_{n=1}^N\overline{f}_n(\overline{\lambda}_n(\alpha=0) > f^{\textnormal{total}}$, we have:
    \begin{align}
    \overline{\lambda}_n(\alpha=0) > \overline{\lambda}_n(\alpha^{*}).
    \end{align}
\end{enumerate}
Summarize above cases, and a solution $\widehat{\lambda}_n$ is obtained: 
\begin{align}
    \widehat{\lambda}_n = \min \{\overline{\lambda}_n(\alpha=0), \overline{\lambda}_n(\alpha^{*})\},~\forall n \in \mathcal{N}.\label{suboptimal:lambda}
\end{align}\\
\textbf{Branch.} 
If a solution doesn't meet the integrality constraint yet achieves a better objective function value, we branch the solution space into two subproblems. We use problem $\mathbb{P}_3$ and its solution $\widehat{\bm{\lambda}}$ as an example to demonstrate this process:
\begin{itemize}
    \item 
    Select a $\widehat{\lambda}_k$ such that it is not an integer in $\{\lambda_{\textnormal{o}1}, \ldots,\lambda_{\textnormal{o}m}\}$ for separation:  $\lambda_k^{-} = \lfloor \widehat{\lambda}_k \rfloor$ and $
    \lambda_k^{+} = \lceil \widehat{\lambda}_k \rceil,$ where $\lfloor \cdot \rfloor$ and $\lceil \cdot \rceil$ denote floor and ceil functions to integers in $\{\lambda_{\textnormal{o}1}, \ldots,\lambda_{\textnormal{o}m}\}$, and $\lambda_k^{-},\lambda_k^{+}$ are rounded results.
    \item Based on $\mathbb{P}_3$, generate two new subproblems with two additional constraints, seperately:
    \begin{align}
        \textbf{(SUB1):}~ \min_{\bm{\lambda}}~& \text{(\ref{Subproblem1:v1})}\nonumber\\
    \text{s.t.}~ & 
    \text{(\ref{Constra:lambda_1})}, \text{(\ref{Constra:lambda_2})}, \text{(\ref{constra:lambda_con})},\nonumber\\
    & \lambda_{o1} \leq \lambda_k \leq \lambda_k^{-}. \nonumber \\
    \textbf{(SUB2):}~\min_{\bm{\lambda}}~& \text{(\ref{Subproblem1:v1})}\nonumber\\
    \text{s.t.}~ & \text{(\ref{Constra:lambda_1}), (\ref{Constra:lambda_2}), (\ref{constra:lambda_con})},\nonumber\\
    & \lambda_k^{+} \leq \lambda_k \leq \lambda_{om}.  \nonumber
    \end{align}
\end{itemize}


 \setlength{\abovedisplayskip}{2.5pt plus 1pt minus 0pt}
\setlength{\belowdisplayskip}{2.5pt plus 1pt minus 0pt}
\setlength\abovedisplayshortskip{1pt plus 1pt minus 0pt}
\setlength\belowdisplayshortskip{1pt plus 1pt minus 0pt}

\section{Proof of Theorem \ref{theorem:optimal_p_B}}\label{Appendix:p,B}

We first write down the partial Lagrange function of $\mathbb{P}_4$:
\begin{align}
\mathcal{L}_2(\bm{p},\bm{B},\beta,\bm{\gamma})&=\sum_{n=1}^N\big((p_nd_{n})^2z_{n}+ \frac{1}{4(r_n)^2z_{n}}\big)
\nonumber\\
&\hspace{-10pt}+\beta(\sum_{n=1}^N B_{n}-B^{\textnormal{total}}) +\sum_{n=1}^N\gamma_n(r_n^{\textnormal{min}}-r_n).
\end{align}
The corresponding KKT conditions are as follows:\\
\textbf{Stationarity:}
\begin{subequations}
\begin{align}
    \frac{\partial \mathcal{L}_2}{\partial p_n} = &2p_nd_{n}^2z_{n}-(\frac{1}{2r_n^3z_{n}}\!+\!\gamma_n)\frac{\partial r_n}{\partial p_n}=0,~\forall n \in \mathcal{N},\label{Lagrange:partial_p}\\
   \frac{\partial \mathcal{L}_2}{\partial B_n} =& -\!(\frac{1}{2r_n^3z_n}+\gamma_n)\frac{\partial r_n}{\partial B_n}+\beta=0,~\forall n \in \mathcal{N}. \label{Lagrange:partial_B}
\end{align}
\end{subequations}\\
\textbf{Comlementary Slackness:}
\begin{subequations}
\begin{align}
    &\beta\cdot(\sum_{n=1}^N B_n- B^{\textnormal{total}}) = 0, \label{Lagrange:comp_beta}\\
    &\gamma_n\cdot(r_{n}^{\textnormal{min}}- r_n) = 0,~\forall n \in \mathcal{N}.\label{Lagrange:comp_gamma}
\end{align}
\end{subequations}\\
\textbf{Primal feasibility:} 
(\ref{Constra:p}), (\ref{Constra:B}).\\
\textbf{Dual feasibility:}
\begin{subequations}\label{Dualfeasibility}
\begin{align}
    &\text{(\ref{Dualfeasibility}a):~}\beta \geq 0,~~~
    \text{(\ref{Dualfeasibility}b):~}\gamma_n \geq 0,~\forall n \in \mathcal{N}.\notag
\end{align}
\end{subequations}\\
From (\ref{Lagrange:partial_p}), we could derive its solution $p_n$ as a function of $B_n$ and $\gamma_n$: $p_n(B_n,\gamma_n)$. Considering constraint (\ref{Constra:p}), we further refine $p_n(B_n,\gamma_n)$ as follows:
\begin{align}
    \overline{p}_n(B_n,\gamma_n)=\min\{p_n^{\textnormal{max}}, p_n(B_n,\gamma_n)\}.\label{temp:1}
\end{align}\\
Similarly, substituting $\overline{p}_n(B_n,\gamma_n)$ in (\ref{Lagrange:partial_B}), we can derive its solution $\overline{B}_n(\beta,\gamma_n)$.
Next, if $\gamma_n=0$ and given a feasible $\beta$, we have $\hat{B}_n=\overline{B}_n(\beta,\gamma_n=0)$ and $\hat{p}_n=\overline{p}_n(\hat{B}_n,\gamma_n=0)$. Thus, the corresponding transmission rate is $\hat{r}_n=r_n(\hat{p}_n,\hat{B}_n)$. An analysis of the following two cases based on $\hat{r}_n$ is provided:
\begin{enumerate}
    \item $r_n^{\textnormal{min}} \leq \hat{r}_n$. In this case, we can simply set $\gamma_n=0$, and all other KKT conditions are satisfied.
    \item $r_n^{\textnormal{min}} > \hat{r}_n$. Here, we set $\gamma_n>0$ and make sure $r_n = r_n^{\textnormal{min}}$ according to (\ref{Lagrange:comp_gamma}). Given this condition, and by utilizing $\overline{B}_n(\beta,\gamma_n)$ alongside $\overline{p}_n(\overline{B}_n,\gamma_n)$, we could derive the solution $\gamma_n$ denoted as a function of employed $\beta:\gamma_n(\beta)$.
\end{enumerate}
Summarize both cases, and we have:
\begin{align}\label{optimal:gamma}
    \overline{\gamma}_n(\beta) = 
    \begin{cases}
    0, &\text{if } r_n^{\textnormal{min}} \leq r_n(\hat{p}_n,\hat{B}_n) \\
    \gamma_n(\beta), &\text{otherwise}.
    \end{cases}
\end{align}
Next, we further refine the value of multiplier $\beta$. 
It is obvious from (\ref
{Lagrange:partial_B}) that $\beta>0$, thus from (\ref{Dualfeasibility}a) we have:
\begin{align}
    \sum_{n=1}^N\overline{B}_n\big(\beta>0,\overline{\gamma}_n(\beta)\big)-B^{\textnormal{total}}=0.\label{temp:2}
\end{align}
A bisection method could solve (\ref{temp:2}) and derive the  solution $\beta^*$. Subsitute $\beta^*$ in (\ref{optimal:gamma}) and we have $\gamma_n^*=\overline{\gamma}_n(\beta^*)$.
With $\beta^*$ and $\gamma_n^*$, the optimal $p_n^*$ and $B_n^*$ can be expressed as follows:
\begin{align}
    B_n^* &= \overline{B}_n(\beta^*,\gamma_n^*),\label{equa:optimal_B}\\
    p_n^* &= \overline{p}_n(B_n^*,\gamma_n^*),\label{equa:optimal_p}
\end{align}
where functions $\overline{B}_n(\cdot,\cdot)$ and $\overline{p}_n(\cdot,\cdot)$ have been defined in the above analysis via (\ref{temp:1}) and the sentence succeeding it. \qed

\ifCLASSOPTIONcaptionsoff
  \newpage
\fi

\end{document}